\tikzset{
block/.style={
  draw,
  rectangle,
  minimum height=1.5cm,
  minimum width=3cm, align=center
  },
line/.style={<-,>=latex'}
}
\theoremstyle{plain}
\newtheorem{thm}{Theorem}[section]
\newtheorem{theorem}{Theorem}[section]
\newtheorem{lemma}[thm]{Lemma}
\newtheorem{prop}[thm]{Proposition}
\newtheorem{corollary}{Corollary}
\theoremstyle{definition}
\newtheorem{definition}{Definition}[section]
\newtheorem{rem}{\textbf{Remark}}
\newcommand{\dsp}{\displaystyle}
\newcommand{\dd}{\text{d}}
\journal{Journal of \LaTeX\ Templates}
\begin{document}
\begin{frontmatter}

\title{
Timoshenko beam under finite and dynamic  transformations: Lagrangian coordinates and Hamiltonian structures}
\author[oscar]{Oscar Cosserat \fnref{oc}}
\author[loic]{Lo\"{\i}c  Le Marrec\fnref{ll}}
\address[oscar]{Georg-August-University G\"ottingen, Bunsenstra{\ss}e 3-5, 37073, G\"ottingen, Germany}
\address[loic]{Univ Rennes, CNRS, IRMAR - UMR 6625, F-35000 Rennes, France}
\fntext[oc]{oscar.cosserat@mathematik.uni-goettingen.de}
\fntext[ll]{loic.lemarrec@univ-rennes.fr}

\begin{keyword}
Timoshenko beam\sep
Geometrically exact formulation\sep
Large transformation\sep
Lagrangian coordinates\sep
Hamiltonian mechanics\sep
Poisson brackets
\end{keyword}
\begin{abstract}
In the framework of Timoshenko beam, the material parameters are inherently prescribed on the material moving frame. In this regard, we derive the strong and weak formulations of the dynamics under finite transformation in Lagrangian coordinates. Accordingly, analytical mechanics tools are used to deduce a new Hamiltonian formulation of the model which proves to be remarkably simple and synthetic.
\end{abstract}
\end{frontmatter}

\section*{Acknowledgments and data statement}
We wish to thank all attendees of the GDR CNRS $n^\circ 2043$ "G\'eom\'etrie Diff\'erentielle et M\'ecanique" which, through numerous exchanges, helped us to clear out our vision of the different topics related to this work. The first author acknowledges the Deutsche Forschungsgemeinschaft RTG 2491 for its financial support. This manuscript has no associated data.

\tableofcontents

\newpage

\section{Introduction}
Analysis of one-dimensional structures began in the 15th century, thanks to the pioneering work of Leonardo da Vinci. Later on, Leonhard Euler, James and Daniel Bernoulli provided detailed explanations and analyses regarding one-dimensional structures, specifically focusing on static of beams \cite{euler1744methodus,BookLevien}. During the 20th century, Stephen Timoshenko further extended the understanding of one-dimensional structures \cite{elishakoff2020developed}. Inspired by three-dimensional elasticity and solutions of the Saint-Venant problem, he provided an alternative of Euler-Bernoulli beam theory. In practice, Timoshenko's work involves the effects of shear deformation and rotational inertia, which were not accounted in the original theory \cite{Timoshenko1921lxvi}. By incorporating these additional factors, Timoshenko beam theory provided a more comprehensive and accurate representation of beam behavior \cite{simao2017influence}. \\
This more general model has motivated numerous studies on various aspects of structural mechanics. These include investigations into elasticity \cite{sankar2001elasticity,rakotomanana2006elements,le2018vibration}, plasticity and damage \cite{bavzant2002nonlocal} or buckling \cite{wang2006buckling,spagnuolo2019targeted}. Even if exact solutions are available for statics and dynamics in a linear regime, solutions of the non-linear dynamical approach were mainly given using numerical methods due to the difficulty of solving analytically such non-linear partial differential equations. Some analytical solutions of the static non-linear problem are available in the planar case (\textit{e.g.} \cite{MarwanPoutre}) and give, in turn, interesting consequences on stability \cite{koutsogiannakis2023double} or buckling \cite{hariz2022buckling}. \\
In the second part of the twentieth century the equations governing the dynamics under large transformation of such Timoshenko beam have been obtained in a geometrically exact way through the pioneer work of Reissner \cite{reissner1973one,reissner1972one}, Whithman \cite{whitman1974exact} and Simo \cite{simo1985finite}. A modern synthesis can be found in \textit{e.g.} \cite{mata2008inelastic,attard2003finite}. In practice, these kinematical formulations are based on the analysis of a one-dimensional Cosserat structure embedded in the Euclidean space \cite{cosserat1909theorie}. Taking into account the kinematics of the rigid frame associated to the section-plane allows to avoid kinematical constraints inherent to Euler-Bernoulli beam. Interestingly, these additional degrees of freedom avoid Lagrange multipliers on the associated variational principle.\\
Mathematical modeling of the geometrically-exact beam model has been an active field of research for decades now. The derivation of the model is revisited in \cite{kapania2003} with emphasis on initially curved beam in order to compute static problems \cite{kapania2003_2}. In \cite{auricchio2008}, a splitting of the deformation gradient is used to compute the model from three-dimensional finite-elasticity and study the finite-deformation small-strain case. \cite{omurtag2016} applies the Frenet–Serret frame to ensure strain measure objectivity in a finite element implementation. Modeling for finite element approach can also be found in \cite{gruttmann2000,levyakov2015formulation}, and in the series of articles \cite{pimenta2019, pimenta2020, pimenta2021} using Rodrigues formula. An interested reader might consult \cite{murakami2017_1} and \cite{murakami2017_2} for a Cartan frame perspective, where a more differential geometry oriented framework is applied to the analytical mechanics of beam theory.\\
From an other point of view, it is well known that considering mechanics with respect to Hamiltonian formulation provides interesting insights (see for instance \cite{lanczos2012variational}). It leads to study Lie algebras and invariant theory \cite{marsden2013introduction}. On this aspect, the Hamiltonian structure of Timoshenko beam under large transformation has been less analyzed. One of the reason is that the geometrically exact formulation is quite recent \cite{chadha2017introductory}. An other reason is that for the study of finite transformations -- in opposition to the linearized problem -- various formulations are possible and generate some confusion between the tangent space being convected by the director-frame and the standard tangent space of the Euclidean space. As an example, frame-invariance is formulated in a different way depending on the choice of a material or spatial point of view \cite{IBRAHIMBEGOVIC20025159}. \\
In their seminal work, Simo and Vu-Quoc (see \cite{simo1989}, rk 6.1) noticed the existence of one Hamiltonian formulation through a variational principle and its usefulness for computational purposes. In \cite{simo1988hamiltonian}, a rigorous construction of Hamilton equations and appropriate Poisson brackets is provided. There, this construction is aimed at studying conservation laws and Noether theorem on one hand, and stability and bifurcations on the other hand \cite{cedolin2010stability}. A crucial use is made of left multiplication by rotations in order to deal with intrinsic group invariance. A Hamiltonian formulation of geometrically exact beam is provided in \cite{chadha2021poisson} using Poisson brackets through a Legendre transformation. However, none of them  fully rely on Lagrangian coordinates. \\
In turn, geometrical studies of Timoshenko model apply to numerical purposes. Simo \emph{et al.} \cite{simo1995} studied the geometric structure of the three-dimensional non-linear dynamics of rods in order to develop and benchmark geometric numerical methods. This aimed at comparing this geometric approach to more traditional numerics. Simo's formulation is already applied in \cite{crisfield1999objectivity} to construct an objective finite element formulation. Ghosh and Roy \cite{ghosh2008, ghosh2009} investigated parametrization of rotations for objectivity--ensuring numerical schemes, discussing whether quaternions or rotation vectors should be used. Demoures \emph{et al.} \cite{demoures2015discrete}, Hante and co-authors \cite{hante2022lie} and Herrmann and Kotyczka \cite{herrmann2024} derived structure preserving integrators for geometrically exact beam dynamics by using a Lie group variational integrator. The latter even investigated a numerical framework to handle dissipation. About ten years ago, \cite{zhong2016} reformulated the objectivity property of a given finite element formulation by the equivariance of the corresponding interpolation operator.\\
These last decades, the study of symmetries of Timoshenko model attracted more and more attention. Here, too, Lie group structures can be used in beam theory to describe the invariances and symmetries associated with beam motions and deformations. In \cite{MarwanLie,lerbet2005intrinsic,lerbet2003general}, a theoretical study of one-dimensional Cosserat structures has been performed using the Lie algebra of the displacement group. Djondjorov \cite{djondjorov1995invariant} studied invariance properties of the Timoshenko beam equations using a Lie group formalism. \\
A point has been neglected so far. In order to preserve objectivity and material invariance, the parameters of the model are naturally expressed in material coordinates. Therefore, it is natural to study the Timoshenko model in Lagrangian coordinates: by avoiding any decomposition of tensors on the Cartesian frame associated to the ambient space.  This is the main motivation of the present paper. \\
\\
\emph{Outline of the paper} Basics on beam modeling are first presented in an extensive way, in section \ref{Chapitre1}, by describing in a detailed manner all the mathematical tools used to describe such Cosserat structure involving both first-order (\textit{e.g.} translation) and second order (\textit{e.g.} rotation) tensors. Particular attention is paid to the choice of formulation, privileging either components or full tensors, in the moving frame convected by the beam section. In the section \ref{Chapitre3}, the main mechanical objects are introduced: strains, stress, energy densities, etc. This allows to expose in a clear way conservation equation, compatibility conditions and a variational formulation of the dynamical problem. Again, the choice of formulation is discussed: mechanical quantities are expressed on a moving frame whereas numerical resolution invokes algebraic formulations. Once these mechanical preliminaries are stated, the Hamiltonian formalism is presented and discussed in section \ref{sec:ham_mechanics_tim}. We devote one part of this section to a detailed explanation of the choice of variables. Using the mobile frame, we exhibit a set of variables that is different from the one we used for the equilibrium equations. In addition to the Poisson brackets already mentioned in the literature \cite{simo1988hamiltonian}, two  Poisson brackets are constructed to recover equations of motion through the Hamilton equation. The last is new and is proved to deliver time-variations of curvature $\boldsymbol{\kappa}$ and strain-vector $\boldsymbol{\varepsilon}$. 
\section{Cosserat beam formalism and associated tools}\label{Chapitre1}
One considers a one-dimensional Cosserat beam model \cite{cosserat1909theorie} defined by a material curve $\mathfrak{C}$ called the fiber. $\mathfrak{C}$ lies in a three-dimensional, and oriented, Euclidean space $\mathbb{R}^3$. In practice, the curve corresponds to the position of the centroids $G$ of the cross-sections $\mathfrak{S}$ of the beam. In the stress-free reference configuration, the fiber $\mathfrak{C}$ is a straight segment of length $L$ and the sections are normal to the spatial curve. These sections are uniform all along the fiber. For such a beam model, the sections $\mathfrak{S}$ of the beam are supposed to be rigid whatever the transformation. Hence, placement of centroid and orientation of the cross-section are two independent degrees of freedom. In particular, the sections are not necessarily normal to the fiber $\mathfrak{C}$ as opposed to Euler-Bernoulli beam assumption. We explain below the mathematical counterpart of these assumptions.

\subsection{Modeling}
Considering $S \in [0, L]$ as a material curvilinear coordinate of $\mathfrak{C}$, the placement of $G$ after transformation is defined by the map 
$$
 \dsp S \in [0, L]  \rightarrow \boldsymbol{\varphi}(S,t) := \bvec{OG}(S,t)\ \in \mathbb{R}^3.
$$
where $O$ is an arbitrary fixed origin of the surrounding space. \\
For such a Cosserat-like structure it is justified to use a moving orthonormal frame, called the director frame basis $\{\bvec{d_i}\}:=(\bvec{d_{1}},\bvec{d_{2}},\bvec{d_{3}})$, for which $(\bvec{d_{1}},\bvec{d_{2}})$ is a principal basis of the cross-section $\mathfrak{S}$ and $\bvec{d_{3}}$ is  normal to $\mathfrak{S}$. \\
In a general configuration, the basis $\{\bvec{d_i}\}$ depends on the time and the curvilinear abscissa $S$ of the beam, while, in the reference configuration, $\{\bvec{d_i}\}$ coincides with a chosen Cartesian frame $\{\bvec{e_I}\}:=(\bvec{e_1,e_2,e_3})$, assumed to be constant with respect to  $t$ and $S$. \\
Since $\{\bvec{e_{I}}\}$ and $\{\bvec{d_{i}}\}$ are both orthonormal basis  having the same orientation, a rotation tensor $\bvec{R}(S,t) \in SO(3)$ relates each basis
\begin{equation}\label{DEFrotation}
\bvec{d_i}(S,t)=\bvec{R}(S,t)\bvec{e_i}.
\end{equation}
A configuration is completely defined by the determination of $\boldsymbol{\varphi}(S,t)$ and $\bvec{R}(S,t)$. 
Note that $\{\bvec{d_i}\}$ is a material and convected basis, and $S$ is a material coordinate. Therefore, $\boldsymbol{\varphi}$ is not assumed to be an isometry onto its image: $\|\frac{\partial \boldsymbol{\varphi}}{\partial S}\| \neq 1$ a priori. This allows compressibility of the fiber $\mathfrak{C}$.

\subsection{Mobile frame notation}
Let us consider\footnote{Einstein notation is used all along the paper} a vector $\bvec{u}=u_i\bvec{d_{i}}$ expressed in the moving frame $\{\bvec{d_i}\}$. The components $u_i$ are said to be Lagrangian coordinates. An alternative notation is:
$$
u:=
\begin{pmatrix}
    u_{1}\\
    u_{2}\\
    u_{3}
\end{pmatrix}
\quad\quad\quad\textrm{and then}\quad\quad
d_{1}:=
\begin{pmatrix}
    1\\
    0\\
    0
\end{pmatrix}
, \quad\
d_{2}:=
\begin{pmatrix}
    0\\
    1\\
    0
\end{pmatrix}
, \quad\
d_{3}:=
\begin{pmatrix}
    0\\
    0\\
    1
\end{pmatrix}.
$$
Within this notation, the \emph{moving} aspect of the frame is lost. We highlight these differences by the following:
\begin{itemize}
    \item vectors with \textit{algebraic} notation are denoted by a regular font, \textit{e.g.} $u$
    \item vectors \textit{in the moving frame} are denoted by a fat font, \textit{e.g.}  $\bvec{u}=u_i\bvec{d_{i}}.$ We call this the \textit{mobile} notation.
\end{itemize}
These requires particular attention when one comes to derivations. We will highlight this in section \ref{sec:derivation}.

\subsection{Properties of the cross product}
To fix ideas, let us introduce some useful mathematical tools. For vectors in $\mathbb{R}^3$, $\wedge$ and $<,>$ are used for the cross product and for the Euclidean scalar product, respectively. The commutator of 2 matrices $\bvec{A}$ and $\bvec{B}$ is set to be $[\bvec{A},\bvec{B}]=\bvec{A}\bvec{B}-\bvec{B}\bvec{A}$ whereas the scalar product of matrices is defined by:
\begin{equation}\label{eq:defmetric}            
\ll \bvec{A},\bvec{B}\gg  := \frac{1}{2}\text{Tr}(\bvec{A}^{T} \cdot \bvec{B}).
        \end{equation}
Let us denote $j$ the map
\begin{equation}
    \begin{array}{ccccc}
        j &\colon & so(3) &\to &\mathbb{R}^3  \\
           && \begin{pmatrix}
               0 & -u_3 & u_2 \\
               u_3 & 0 & -u_1 \\
               -u_2 & u_1 & 0
           \end{pmatrix}& \mapsto& \begin{pmatrix}
               u_1\\
               u_2\\
               u_3
           \end{pmatrix}.
    \end{array}
\end{equation}

We will use the following properties of the map $j$ all along this article.
\begin{prop}\label{prop:j}
    The map $j$ is a linear isomorphism. Furthermore,
    \begin{enumerate}
        \item it is an isometry: for any $\bvec{A}$ and $\bvec{B}$ in $so(3),$
        \begin{equation}
            <j(\bvec{A}),j(\bvec{B})> = \ll \bvec{A},\bvec{B}\gg
        \end{equation}
        \item it is a Lie morphism:
        \begin{equation}
            [\bvec{A},\bvec{B}] = j(\bvec{A}) \wedge j(\bvec{B})
        \end{equation}
        \item for any $\bvec{u} \in \mathbb{R}^3,$
            \begin{equation}
                \bvec{A} \cdot \bvec{u} = j(\bvec{A}) \wedge \bvec{u}.
            \end{equation}
    \end{enumerate}
\end{prop}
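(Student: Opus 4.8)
The plan is to dispatch the four claims in the order in which they build on one another, so that the only genuine work is the verification of a single $3\times 3$ matrix product. First, \emph{linear isomorphism}: linearity is immediate, since every entry of the matrix on the left of the definition depends $\mathbb{R}$-linearly (with coefficients $0,\pm1$) on $x,y,z$; and the assignment $(x,y,z)\mapsto\text{matrix}$ is an explicit two-sided inverse, so $j$ is a bijection. Equivalently, a skew-symmetric $3\times3$ matrix is determined by its three independent entries, hence $\dim so(3)=3=\dim\mathbb{R}^3$ and the linear surjection $j$ is an isomorphism. I will write the inverse (the ``hat map'') as $a\mapsto\widehat a$, so that $j(\widehat a)=a$ and $\widehat{j(\bvec A)}=\bvec A$.

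Next I would prove item (3), because items (2) and (1) reduce to it. Writing $\bvec A=\widehat a$ with $a=j(\bvec A)=(x,y,z)^\top$ and $\bvec u=(u_1,u_2,u_3)^\top$, one multiplies $\bvec A\bvec u$ out row by row: the entries are $-zu_2+yu_3$, $\ zu_1-xu_3$, $\ -yu_1+xu_2$, which is exactly $a\wedge\bvec u$. This is nothing more than the defining property of the cross product read through $j$, and it is three one-line checks. Then item (2) follows formally: for $a=j(\bvec A)$, $b=j(\bvec B)$ and any $\bvec u$, item (3) applied twice gives $[\bvec A,\bvec B]\,\bvec u=\bvec A(\bvec B\bvec u)-\bvec B(\bvec A\bvec u)=a\wedge(b\wedge\bvec u)-b\wedge(a\wedge\bvec u)$, and the vector triple product (Jacobi) identity collapses the right-hand side to $(a\wedge b)\wedge\bvec u$. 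Hence $[\bvec A,\bvec B]$ acts on every $\bvec u$ the way $\widehat{a\wedge b}$ does, so $[\bvec A,\bvec B]=\widehat{j(\bvec A)\wedge j(\bvec B)}$, i.e. $j([\bvec A,\bvec B])=j(\bvec A)\wedge j(\bvec B)$ — which is the asserted formula once the left-hand side is read through $j$.

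Finally, item (1): with the standard invariant inner product on $so(3)$, namely $\ll\bvec A,\bvec B\gg=\tfrac12\,\mathrm{tr}(\bvec A^\top\bvec B)$, a direct expansion of the product of the two skew matrices gives $\mathrm{tr}(\bvec A^\top\bvec B)=2(xx'+yy'+zz')$ when $j(\bvec A)=(x,y,z)^\top$ and $j(\bvec B)=(x',y',z')^\top$, so $\ll\bvec A,\bvec B\gg = <j(\bvec A),j(\bvec B)>$, which is the isometry claim.

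I do not expect a real obstacle here: the only input beyond direct computation is the vector triple product identity used in item (2). The care instead goes into conventions — the orientation and sign fixed by the displayed formula for $j$, the normalization $\tfrac12\,\mathrm{tr}$ that makes $j$ a true isometry (any other scaling merely rescales the norm by a constant), and the mild type abuse in item (2), where $j$ is implicitly applied to the bracket. I would therefore state these conventions explicitly at the top of the proof, after which the three short matrix computations carry everything through.
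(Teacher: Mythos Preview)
Your proof is correct and complete. The paper itself does not supply a proof of this proposition: it is stated as a known fact, followed only by clarifications of the notation $\wedge$, $<\,,\,>$, $[\,,\,]$ and $\ll\,,\,\gg$. Your route --- verifying item~(3) by a single $3\times3$ product, deducing item~(2) via the Jacobi identity for the cross product, and checking item~(1) by expanding $\tfrac12\mathrm{Tr}(\bvec{A}^\top\bvec{B})$ --- is the standard argument and is exactly what one would expect the authors had in mind. You are also right to flag the type mismatch in item~(2): as written in the paper, $[\bvec{A},\bvec{B}]\in so(3)$ while $j(\bvec{A})\wedge j(\bvec{B})\in\mathbb{R}^3$, so the intended statement is $j([\bvec{A},\bvec{B}])=j(\bvec{A})\wedge j(\bvec{B})$, which is what you prove.
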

Note that the last expressions are base-dependent in the sense that cross product is considered on columnn vectors associated to a given vector basis. In the following, all computations are performed on the mobile frame $\{\bvec{d_i}\}$. Cartesian components are never invoked in these computations.

\subsection{Curvature and spin}
According to \eqref{DEFrotation}, spatial derivation of directors writes:
\begin{equation}
\label{directors}
    \frac{\partial \bvec{d_{i}}}{\partial S}=\mathbb{K}\bvec{d_{i}}
    \quad\quad\quad
    \textrm{where}
    \quad\quad\quad
    \mathbb{K}(S,t)=\bvec{R}^{-1}\frac{\partial \bvec{R}}{\partial S}
\end{equation}
By writing $\mathbb{K}(S,t),$ one highlights that this tensor is time-dependent and not homogeneous. As $\mathbb{K}(S,t) \in so(3)$, one introduces the curvature
$$
\boldsymbol{\kappa} :=j(\mathbb{K})
$$
and then \eqref{directors} can be formulated with the cross product:
\begin{equation}\label{deriveDIS}
    \frac{\partial \bvec{d_{i}}}{\partial S}= \boldsymbol{\kappa} \wedge  \bvec{d_{i}}
\end{equation}
The reader may notice that, introducing the components $\kappa_{i}=\boldsymbol{\kappa}\cdot\bvec{d_{i}}$ of $\boldsymbol{\kappa}$ in the moving frame:
\begin{equation}\label{eq:kappa}
\kappa:=
\begin{pmatrix}
\kappa_{1}\\
\kappa_{2}\\
\kappa_{3}
\end{pmatrix}
\quad\quad\quad\textrm{and}\quad\quad
\mathbb{K}:=
\begin{pmatrix}
0			&	-\kappa_{3}	&	\kappa_{2}\\
\kappa_{3}	&	0			&	-\kappa_{1}\\
-\kappa_{2}	&	\kappa_{1}	&	0\\
\end{pmatrix}.
\end{equation}
Equation \eqref{eq:kappa} gives the components of $\mathbb{K}$ in the \emph{moving frame}. \\
The same holds true for time derivation: introducing
\begin{equation}\label{timederivdirector}
\mathbb{W}(S,t):=\bvec{R}^{-1}\frac{\partial \bvec{R}}{\partial t },
\quad\quad\quad
\textrm{and the spin}
\quad\quad
\boldsymbol{\omega} :=j(\mathbb{W}),
\end{equation}
one obtains by differentiation with respect to the time
\begin{equation}\label{deriveDIt}
    \frac{\partial \bvec{d_{i}}}{\partial t}=\mathbb{W}\bvec{d_{i}}
    = \boldsymbol{\omega} \wedge  \bvec{d_{i}}
\end{equation}
and associated components in the moving frame are:
\begin{equation}
\omega:=
\begin{pmatrix}
\omega_{1}\\
\omega_{2}\\
\omega_{3}
\end{pmatrix}
\quad\quad\quad\textrm{and}\quad\quad
\mathbb{W}:=
\begin{pmatrix}
0			&	-\omega_{3}	&	\omega_{2}\\
\omega_{3}	&	0			&	-\omega_{1}\\
-\omega_{2}	&	\omega_{1}	&	0\\
\end{pmatrix}.
\end{equation}

\subsection{Derivations in material coordinates}\label{sec:derivation}
For any vector $\bvec{u}(S,t)$:
\begin{equation}\label{eq:time_der}
\frac{\partial \bvec{u}}{\partial S}=\frac{\partial u_i}{\partial S}\bvec{d_i}+\boldsymbol{\kappa}\wedge\bvec{u},
\quad \quad 
\frac{\partial \bvec{u}}{\partial t}=\frac{\partial u_i}{\partial t}\bvec{d_i}+\boldsymbol{\omega}\wedge\bvec{u},
\end{equation}
and
the Schwarz Lemma gives:
\begin{equation}\label{crossderivation1U}
\frac{\partial }{\partial t}\frac{\partial \bvec{u}}{\partial S} = \frac{\partial }{\partial S}\frac{\partial \bvec{u}}{\partial t} .
\end{equation}
We apply this to the material basis $\{\bvec{d_i}\}$ to obtain the following result.
\begin{prop}\label{prop:loic}
Time derivative of curvature and space derivative of spin are related by:
\begin{equation}\label{crossderivation2}
\frac{\partial \boldsymbol{\kappa}}{\partial t} = \frac{\partial \boldsymbol{\omega} }{\partial S} + \boldsymbol{\omega}\wedge \boldsymbol{\kappa}.
\end{equation}
\end{prop}
\begin{proof}
For any directors $\bvec{d_{i}}(S,t)$, the computation of crossed derivation gives:
\begin{align*}
\frac{\partial }{\partial t}\frac{\partial \bvec{d_{i}}}{\partial S} 
&=
\frac{\partial }{\partial t}\left(\boldsymbol{\kappa}\wedge \bvec{d_{i}} \right) 
&
\frac{\partial }{\partial S}\frac{\partial \bvec{d_{i}}}{\partial t} 
&=
\frac{\partial }{\partial S}\left(\boldsymbol{\omega}\wedge \bvec{d_{i}} \right) \\
&=
\frac{\partial \boldsymbol{\kappa} }{\partial t}\wedge \bvec{d_{i}} 
+
 \boldsymbol{\kappa} \wedge \frac{\partial \bvec{d_{i}}}{\partial t}	
 &
&=
\frac{\partial \boldsymbol{\omega} }{\partial S}\wedge \bvec{d_{i}} 
+
 \boldsymbol{\omega} \wedge \frac{\partial \bvec{d_{i}}}{\partial S}	\\
&=
\frac{\partial \boldsymbol{\kappa} }{\partial t}\wedge \bvec{d_{i}} 
+
 \boldsymbol{\kappa} \wedge \left(\boldsymbol{\omega}\wedge \bvec{d_{i}}\right)	
&
 &=
\frac{\partial \boldsymbol{\omega} }{\partial S}\wedge \bvec{d_{i}} 
+
 \boldsymbol{\omega} \wedge \left(\boldsymbol{\kappa}\wedge \bvec{d_{i}}\right)	
\end{align*}
Equating the two terms and using Jacobi identity leads to
\begin{align}
\begin{split}
\frac{\partial \boldsymbol{\kappa} }{\partial t}\wedge \bvec{d_{i}} 
+
 \boldsymbol{\kappa} \wedge \left(\boldsymbol{\omega}\wedge \bvec{d_{i}}\right)	
&=
\frac{\partial \boldsymbol{\omega} }{\partial S}\wedge \bvec{d_{i}} 
+
 \boldsymbol{\omega} \wedge \left(\boldsymbol{\kappa}\wedge \bvec{d_{i}}\right)	\\
 \frac{\partial \boldsymbol{\kappa} }{\partial t}\wedge \bvec{d_{i}} 
&=
\frac{\partial \boldsymbol{\omega} }{\partial S}\wedge \bvec{d_{i}} 
+
 \boldsymbol{\omega} \wedge \left(\boldsymbol{\kappa}\wedge \bvec{d_{i}}\right)	
+
 \boldsymbol{\kappa} \wedge \left(\bvec{d_{i}}\wedge \boldsymbol{\omega}\right)\\	
  \frac{\partial \boldsymbol{\kappa} }{\partial t}\wedge \bvec{d_{i}} 
&=
\frac{\partial \boldsymbol{\omega} }{\partial S}\wedge \bvec{d_{i}} 
+
 \left( \boldsymbol{\omega} \wedge \boldsymbol{\kappa} \right)	\wedge \bvec{d_{i}}
 \end{split}
\end{align}
for any $\bvec{d_{i}}$. This gives the desired result.
\end{proof}

\subsection{Perturbations in material coordinates}\label{sec:perturbation}
In the present work, an extensive use of linear perturbation is made. In particular, it will be a crucial tool for the variational principle of section \ref{sec:VariationalPrinciple1}. Therefore, this section provides important relations related to linear perturbations. At this stage, we highlight that perturbation $\delta$ and derivation are independent, hence, for any vector $\bvec{u}:$
\begin{align}\label{eq:perturbationderivees}
    \delta \frac{\partial \bvec{u}}{\partial S}= \frac{\partial \delta \bvec{u}}{\partial S}
, \quad \quad 
\delta \frac{\partial \bvec{u}}{\partial t}= \frac{\partial \delta \bvec{u}}{\partial t}
\end{align}
and the same holds for any matrix $\bvec{A}.$ We state now the analogous of \eqref{crossderivation2} for linear perturbations.
\begin{lemma}\label{lem:marsden}
Let $\bvec{R} \colon I \to SO(3)$ and $\Theta = \bvec{R}^{-1} \delta \bvec{R} \in so(3).$ For $\mathbb{K}(S,t)$ and $\mathbb{W}(S,t)$ defined in \eqref{directors} and \eqref{timederivdirector} respectively, one gets:
    \begin{equation}\label{eq:cocycle}
        \delta \mathbb{W} = \frac{\partial\Theta}{\partial t} + [\Theta, \mathbb{W}].
    \end{equation}
And similarly:
\begin{equation}\label{eq:cocycle_space}
    \delta \mathbb{K} = \frac{\partial \Theta}{\partial S} + [\Theta, \mathbb{K}].
\end{equation}
\end{lemma}

\begin{proof}
We compute both: 
\[
\left\{\begin{array}{ccc}
    \dsp \delta \mathbb{W} &=& \dsp \bvec{R}^{-1} \delta (\frac{\partial \bvec{R}}{\partial t})  -  \bvec{R}^{-1} \delta \bvec{R} \bvec{R}^{-1} \frac{\partial \bvec{R}}{\partial t} \\[8pt]
    \dsp  \frac{\partial \Theta}{\partial t} &=& \dsp \bvec{R}^{-1} \frac{\partial \delta \bvec{R}}{\partial t}  - \bvec{R}^{-1} \frac{\partial \bvec{R}}{\partial t} \bvec{R}^{-1} \delta \bvec{R}  
\end{array}.\right.
\]
Equations \eqref{eq:perturbationderivees} provide $\frac{\partial \delta \bvec{R}}{\partial t } = \delta\frac{\partial  \bvec{R}}{\partial t } $ and we conclude at equation \eqref{eq:cocycle}. Equation \eqref{eq:cocycle_space} is proven the same way by replacing time with space derivatives. 
\end{proof}

\begin{corollary}
    Following the definitions $\boldsymbol{\omega} = j(\mathbb{W})$ and $\boldsymbol{\kappa} = j(\mathbb{K})$, one introduces $\delta \boldsymbol{\theta} = j( \Theta)$. A consequence of the previous lemma is:
\begin{equation}\label{eq:var_omega_kappa}
    \left\{
    \begin{array}{ccc}
        \delta \boldsymbol{\omega} &=&\dsp  \frac{\partial \delta \boldsymbol{\theta}}{\partial t} + \delta \boldsymbol{\theta} \wedge \boldsymbol{\omega}\\ [8pt]
    
        \delta \boldsymbol{\kappa} &=&\dsp  \frac{\partial \delta \boldsymbol{\theta}}{\partial S} + \delta \boldsymbol{\theta} \wedge \boldsymbol{\kappa}
    \end{array}
\right. .
\end{equation}
\end{corollary}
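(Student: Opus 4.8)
The plan is to deduce \eqref{eq:var_omega_kappa} from the Lemma by pushing the two cocycle identities \eqref{eq:cocycle} and \eqref{eq:cocycle_space} through the isomorphism $j$ of Proposition~\ref{prop:j}. Everything here is a dictionary translation from $so(3)$ to $\mathbb{R}^3$: the only ingredients are that $j$ is linear, that $j$ sends the matrix commutator to the cross product (item~2 of Proposition~\ref{prop:j}), and the definitions $\boldsymbol{\omega}=j(\mathbb{W})$, $\boldsymbol{\kappa}=j(\mathbb{K})$, $\delta\boldsymbol{\theta}=j(\Sigma)$.

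First I would apply $j$ to \eqref{eq:cocycle}. By linearity of $j$ one has $j(\delta\mathbb{W})=\delta\,j(\mathbb{W})=\delta\boldsymbol{\omega}$ and $j(\dot{\Sigma})=\partial_t\,j(\Sigma)=\partial_t\delta\boldsymbol{\theta}$, since $j$, being a fixed linear map, commutes both with the variation $\delta$ and with the derivation $\partial_t$. For the bracket term, item~2 of Proposition~\ref{prop:j} gives $j([\Sigma,\mathbb{W}])=j(\Sigma)\wedge j(\mathbb{W})=\delta\boldsymbol{\theta}\wedge\boldsymbol{\omega}$. Adding these three contributions reproduces the first line of \eqref{eq:var_omega_kappa}.

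The second line is obtained by the identical computation applied to \eqref{eq:cocycle_space}, with $\partial_t$ replaced by $\partial_S$ and $\mathbb{W}$ (hence $\boldsymbol{\omega}$) replaced by $\mathbb{K}$ (hence $\boldsymbol{\kappa}$): applying $j$ yields $\delta\boldsymbol{\kappa}=\partial_S\delta\boldsymbol{\theta}+\delta\boldsymbol{\theta}\wedge\boldsymbol{\kappa}$.

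I do not anticipate a genuine obstacle; this corollary is the infinitesimal counterpart of the compatibility relation \eqref{crossderivation2}, and the argument is just that of the preceding Lemma read through $j$. The only point requiring care is orientation bookkeeping: the arguments of the commutator must be kept in the same order when converting $[\,\cdot\,,\,\cdot\,]$ into $\wedge$, since swapping them flips the sign, and one should make sure that the variation and the two derivations are understood consistently (in the convected/body sense) on both sides so that $j$ indeed intertwines them.
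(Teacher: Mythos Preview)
Your proposal is correct and follows essentially the same route as the paper: apply $j$ to the two identities of the Lemma, use linearity of $j$ to pass it through $\delta$, $\partial_t$, $\partial_S$, and use the Lie morphism property to convert $[\,\cdot\,,\,\cdot\,]$ into $\wedge$. If anything, you are slightly more explicit than the paper in noting that $j$ also commutes with the variation $\delta$ (needed for $j(\delta\mathbb{W})=\delta\boldsymbol{\omega}$), and your caution about preserving the order of arguments in the bracket is well placed.
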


\begin{proof}
Since $j$ is linear, it commutes with space and time derivations:
\begin{equation}
\dsp \frac{\partial j(\mathbb{W})}{\partial S}  \dsp = j (\frac{\partial \mathbb{W}}{\partial S})
\quad \quad \quad \textrm{and}\quad \quad 
\dsp \frac{\partial j(\Theta )}{\partial t}  \dsp = j (\frac{\partial \Theta}{\partial t}).
\end{equation}
Since $j$ is a Lie morphism, it commutes with Lie brackets:
\begin{equation}
        j([\Theta, \mathbb{W}]) = j(\Theta) \wedge j(\mathbb{W})
        \quad \quad \quad \textrm{and}\quad \quad 
        j([\Theta, \mathbb{K}]) = j(\Theta) \wedge j(\mathbb{K})
\end{equation}
and this concludes the proof.
\end{proof}

\subsection{Derivation and perturbation of a quadratic form}\label{DerivationHermitianProduct}
Let us consider, for any vectors $\bvec{u}(S,t)$ and $\bvec{v}(S,t),$ the quadratic form:
\begin{equation}
f(\bvec{u},\bvec{v}):=\bvec{u}\mathbb{X}\bvec{v},
\end{equation}
where $\mathbb{X}=\mathbb{X}^T$ is symmetric and has time and space independent components when expressed in $\{\bvec{d_i}\}$. Thanks to equation \eqref{eq:time_der}, time and space derivations of such quadratic forms are expressed in the following: 
\begin{prop}
    Time and space derivations of the quadratic form $f$ are respectively related to corotational time and space derivations of associated vectors. In equations:
\begin{align}\label{DerivHermitianProd}
\begin{split}
\frac{\partial\, f}{\partial t}= \left(\frac{\partial \bvec{u}}{\partial t}-\boldsymbol{\omega}\wedge\bvec{u}\right)\mathbb{X}\bvec{v} +  \bvec{u}\mathbb{X}\left(\frac{\partial \bvec{v}}{\partial t}-\boldsymbol{\omega}\wedge\bvec{v}\right)    \\
\frac{\partial\, f}{\partial S}= \left(\frac{\partial \bvec{u}}{\partial S}-\boldsymbol{\kappa}\wedge\bvec{u}\right)\mathbb{X}\bvec{v} +  \bvec{u}\mathbb{X}\left(\frac{\partial \bvec{v}}{\partial S}-\boldsymbol{\kappa}\wedge\bvec{v}\right)    
\end{split}
\end{align}
\end{prop}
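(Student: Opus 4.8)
The plan is to pass to components in the moving frame, where the statement reduces to the ordinary Leibniz rule, and then to translate the correction terms back into corotational form using \eqref{eq:time_der}. Concretely, set $u_i=\bvec{u}\cdot\bvec{d_i}$, $v_i=\bvec{v}\cdot\bvec{d_i}$ and let $X_{ij}$ denote the entries of $\mathbb{X}$ in $\{\bvec{d_i}\}$, which by hypothesis do not depend on $S$ or $t$. Then $f=\sum_{i,j}u_i X_{ij}v_j$ is an honest scalar function of $(S,t)$, and since the $X_{ij}$ are constant the usual product rule gives $\partial_t f=\sum_{i,j}(\partial_t u_i)X_{ij}v_j+\sum_{i,j}u_i X_{ij}(\partial_t v_j)$, and similarly for $\partial_S f$. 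All the work is thus in recognizing the scalars $\partial_t u_i$ and $\partial_S u_i$ as components of intrinsic vectors.

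That identification is immediate from \eqref{eq:time_der}: since $\partial_t\bvec{u}=(\partial_t u_i)\bvec{d_i}+\boldsymbol{\omega}\wedge\bvec{u}$, we get $(\partial_t u_i)\bvec{d_i}=\partial_t\bvec{u}-\boldsymbol{\omega}\wedge\bvec{u}$, i.e. $\partial_t u_i$ is the $i$-th component in $\{\bvec{d_i}\}$ of the corotational derivative $\partial_t\bvec{u}-\boldsymbol{\omega}\wedge\bvec{u}$. (One can also see this by differentiating $u_i=\bvec{u}\cdot\bvec{d_i}$ directly, using $\partial_t\bvec{d_i}=\boldsymbol{\omega}\wedge\bvec{d_i}$ from \eqref{deriveDIt} together with the cyclic identity $(\boldsymbol{\omega}\wedge\bvec{u})\cdot\bvec{d_i}=-\bvec{u}\cdot(\boldsymbol{\omega}\wedge\bvec{d_i})$.) Substituting into the componentwise product rule and reading the two sums back in mobile notation as $\big(\partial_t\bvec{u}-\boldsymbol{\omega}\wedge\bvec{u}\big)\mathbb{X}\bvec{v}$ and $\bvec{u}\,\mathbb{X}\big(\partial_t\bvec{v}-\boldsymbol{\omega}\wedge\bvec{v}\big)$ yields the first line of \eqref{DerivHermitianProd}. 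The second line follows verbatim after replacing $t$ by $S$ and $\boldsymbol{\omega}$ by $\boldsymbol{\kappa}$, using the space part of \eqref{eq:time_der} and $\partial_S\bvec{d_i}=\boldsymbol{\kappa}\wedge\bvec{d_i}$.

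The point that requires care — the main obstacle — is conceptual rather than computational: one must not apply a naive product rule directly to $\bvec{u}\mathbb{X}\bvec{v}$ in mobile notation, because $\mathbb{X}$, although it has constant components in the moving frame, is a frame-dependent tensor whose intrinsic ($\mathbb{R}^3$-valued) time and space derivatives do not vanish. Reducing to components in $\{\bvec{d_i}\}$ is exactly what bypasses this, and the corotational derivatives appearing in \eqref{DerivHermitianProd} are precisely the bookkeeping for the rotation of the frame. A minor secondary point is to get the sign right, namely $\boldsymbol{\omega}\wedge\bvec{u}$ and not $\bvec{u}\wedge\boldsymbol{\omega}$; this is settled by the cyclic identity above, and is consistent with the third item of Proposition \ref{prop:j}, which in components reads $\boldsymbol{\omega}\wedge\bvec{u}=\mathbb{W}\bvec{u}$.
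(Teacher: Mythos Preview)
Your argument is correct and follows essentially the same route as the paper: pass to components in the moving frame (the paper's ``algebraic'' notation $u,v$), apply the ordinary Leibniz rule using that the entries of $\mathbb{X}$ are constant, and then identify the componentwise derivatives with the corotational derivatives via \eqref{eq:time_der}. The only cosmetic difference is that you spell out the index sums $u_iX_{ij}v_j$ explicitly, while the paper writes the same step as $\partial_t(u\mathbb{X}v)=(\partial_t u)\mathbb{X}v+u\mathbb{X}(\partial_t v)$ in its algebraic shorthand.
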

\begin{proof}
Of course for \emph{algebraic} notation:
\begin{equation}\label{eq:der_quad_vec}
\frac{\partial\, f}{\partial t}=\frac{\partial}{\partial t}\left( u\mathbb{X}v \right) = \frac{\partial\, u}{\partial t}\mathbb{X}v  + u\mathbb{X}\frac{\partial\, v}{\partial t}  
\end{equation}
but it cannot be transposed if the \emph{mobile} notation is used, namely if the frame is time-dependent. We use equation \eqref{eq:time_der} to obtain
\begin{equation}
\frac{\partial u }{\partial t} = \frac{\partial \bvec{u}}{\partial t} - \boldsymbol{\omega}\wedge\bvec{u}.   
\end{equation}
The previous relation \eqref{eq:der_quad_vec} gives directly the desired result for time derivation and a similar proof holds for space derivation.
\end{proof}
The following consequence is obtained through differential calculus. 
\begin{corollary}
    The infinitesimal perturbation of the quadratic form $f$ follows in the same way:
\begin{align}\label{PerturbHermitianProd}
\delta f= \left(\delta  \bvec{u}-\delta \boldsymbol{\theta}\wedge\bvec{u}\right)\mathbb{X}\bvec{v} +  \bvec{u}\mathbb{X}\left(\delta \bvec{v}-\boldsymbol{\theta}\wedge\bvec{v}\right)   
\end{align}
\end{corollary}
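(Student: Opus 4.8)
The plan is to run, line for line, the argument used for the preceding Proposition, with the perturbation operator $\delta$ in place of $\partial/\partial t$. Only two features of $\partial/\partial t$ were actually used there: that it is a derivation (Leibniz rule on products), and that its action on the directors is the corotational one, $\partial_t\bvec{d_i}=\boldsymbol{\omega}\wedge\bvec{d_i}$, cf.\ \eqref{deriveDIt}. The operator $\delta$ plainly satisfies the first, so the whole proof hinges on the perturbation analogue of the second,
\[
\delta\bvec{d_i}=\delta\boldsymbol{\theta}\wedge\bvec{d_i},
\]
which is \eqref{deriveDIt} with $\partial_t$ replaced by $\delta$ and the spin $\boldsymbol{\omega}=j(\mathbb{W})$, $\mathbb{W}=\bvec{R}^{-1}\partial_t\bvec{R}$, replaced by $\delta\boldsymbol{\theta}=j(\Sigma)$, $\Sigma=\bvec{R}^{-1}\delta\bvec{R}$.

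First I would establish this identity exactly as in the Lemma of Section \ref{sec:perturbation}: from $\bvec{d_i}=\bvec{R}\bvec{e_i}$ and $\delta\bvec{e_i}=0$ one gets $\delta\bvec{d_i}=(\delta\bvec{R})\bvec{e_i}=\bvec{R}\,(\bvec{R}^{-1}\delta\bvec{R})\,\bvec{e_i}=\bvec{R}\,\Sigma\,\bvec{e_i}$, and reading this in the moving frame — equivalently, applying Proposition \ref{prop:j}.(3) to the skew matrix $\Sigma$ — turns the right-hand side into $\delta\boldsymbol{\theta}\wedge\bvec{d_i}$. Expanding $\bvec{u}=u_i\bvec{d_i}$ then yields the perturbation counterpart of \eqref{eq:time_der},
\[
\delta\bvec{u}=(\delta u_i)\,\bvec{d_i}+\delta\boldsymbol{\theta}\wedge\bvec{u},
\]
i.e.\ the ``algebraic'' perturbation $(\delta u_i)$, reassembled as a mobile vector, equals $\delta\bvec{u}-\delta\boldsymbol{\theta}\wedge\bvec{u}$; the analogous formula holds for $\bvec{v}$.

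Finally I would conclude as in the Proposition. In algebraic components $f=u\,\mathbb{X}\,v$ with the entries of $\mathbb{X}$ in the moving frame held fixed, so that $\delta\mathbb{X}=0$ and the Leibniz rule gives $\delta f=(\delta u)\,\mathbb{X}\,v+u\,\mathbb{X}\,(\delta v)$. Substituting $(\delta u_i)\bvec{d_i}=\delta\bvec{u}-\delta\boldsymbol{\theta}\wedge\bvec{u}$ and likewise for $v$ produces
\[
\delta f=\bigl(\delta\bvec{u}-\delta\boldsymbol{\theta}\wedge\bvec{u}\bigr)\mathbb{X}\bvec{v}+\bvec{u}\,\mathbb{X}\bigl(\delta\bvec{v}-\delta\boldsymbol{\theta}\wedge\bvec{v}\bigr),
\]
which is \eqref{PerturbHermitianProd}. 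The one point demanding care — and the only place the mobile notation can bite — is keeping the three avatars of a vector distinct: the mobile vector $\bvec{u}$, its component column $u$, and the object $(\delta u_i)\bvec{d_i}$ obtained by perturbing the components and reassembling with the directors; identifying the last two is precisely what would silently drop the corotational term $\delta\boldsymbol{\theta}\wedge\bvec{u}$. Once $\delta\bvec{d_i}=\delta\boldsymbol{\theta}\wedge\bvec{d_i}$ is in hand, there is nothing else to check.
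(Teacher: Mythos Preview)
Your proposal is correct and follows exactly the approach the paper intends: the Corollary is stated without proof, the phrase ``follows in the same way'' signalling that one should rerun the Proposition's argument with $\delta$ in place of $\partial/\partial t$ and $\delta\boldsymbol{\theta}$ in place of $\boldsymbol{\omega}$. You have filled in those details faithfully, including the key identity $\delta\bvec{d_i}=\delta\boldsymbol{\theta}\wedge\bvec{d_i}$, and your closing remark about keeping the three avatars of a vector distinct is apt (incidentally, the paper's displayed formula \eqref{PerturbHermitianProd} has a typo --- the second corotational term should read $\delta\boldsymbol{\theta}\wedge\bvec{v}$, as you write).
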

$\delta u :=\delta  \bvec{u}-\delta \boldsymbol{\theta}\wedge\bvec{u}$ is sometimes called the \emph{corotational perturbation}.

\section{Constitutive laws and dynamical behavior}\label{Chapitre3}
One presents the main mechanical quantities allowing to formulate the dynamics of Timoshenko beam. These latter are associated to material parameters (rigidity tensors, inertia tensors) being uniform and constant on the mobile frame. 

\subsection{Internal free energy density}
The strains are defined by two quantities: the spatial curvature $\boldsymbol{\kappa}$ and the strain-vector $\boldsymbol{\varepsilon}$ where
\begin{equation}\label{StrainDefinition}
\boldsymbol{\varepsilon}:=\frac{\partial \boldsymbol{\varphi}}{\partial S}-\bvec{d_{3}}.
\end{equation}
\begin{rem}
    Let us use the linear perturbation tool of the previous section to compute $\delta \boldsymbol{\varepsilon}:$
    \begin{align}
        \delta \boldsymbol{\varepsilon} &= \frac{\partial \delta\boldsymbol{\varphi}}{\partial S}-\delta\bvec{d_3}\\
        &= \frac{\partial \delta\boldsymbol{\varphi}}{\partial S}-\delta\boldsymbol{\theta}\wedge\bvec{d_3}.\label{eq:pertubation_eps}
    \end{align}
    This computation will be used later on.
\end{rem}
We introduce the strain-energy density $U(\boldsymbol{\varepsilon},\boldsymbol{\kappa}).
$ For linear stress-strain relations, it is quadratic:
\begin{equation}\label{PsiGeneral}
U(\boldsymbol{\varepsilon},\boldsymbol{\kappa}) := \frac{1}{2} \boldsymbol{\varepsilon}\mathbb{G}\boldsymbol{\varepsilon}+\frac{1}{2}\boldsymbol{\kappa}\mathbb{H}\boldsymbol{\kappa},
\end{equation}
where the rigidity tensors are diagonal matrices in the $\{\bvec{d_{i}}\}$-frame:
$$
\mathbb{G}:=
\begin{pmatrix}
    GA  &   0   &   0 \\
    0   &   GA  &   0 \\
    0   &   0   &   EA
\end{pmatrix},
\quad\quad\quad
\mathbb{H}:=
\begin{pmatrix}
    EI_{1}  &   0   &   0 \\
    0   &   EI_{2}  &   0 \\
    0   &   0   &   GI_{3}
\end{pmatrix}
$$
and  $G$ and $E$ are the shear and bulk modulus, whereas $A$ and $I_i$ are the area and quadratic moment (along $\bvec{d_i}$) of the cross-section. Note that $\mathbb{G}$ and $\mathbb{H}$ are both constant and uniform. This excludes the case of a non-homogeneous beam to the present work.\\
The internal force $\bvec{N}$ and the torque $\bvec{M}$ acting on a beam section are defined in the following way:
\begin{equation}\label{NMPsirelationGenereal}
\bvec{N}:=\frac{\partial U }{\partial \boldsymbol{\varepsilon}}, 
\quad\quad\quad
\bvec{M}:=\frac{\partial U }{\partial \boldsymbol{\kappa}}.
\end{equation}
From \eqref{PsiGeneral} and \eqref{NMPsirelationGenereal}, the linear stress-strain relations are:
\begin{equation}\label{LinearStressStrainRelation}
\bvec{N}=\mathbb{G}\boldsymbol{\varepsilon}, 
\quad\quad\quad
\bvec{M}=\mathbb{H}\boldsymbol{\kappa}.
\end{equation}

\subsection{Kinetics}
The kinetics of the beam is defined by the spin $\boldsymbol{\omega}$ of the cross-section and the velocity $\bvec{v}$ of the center of mass: 
\begin{equation}\label{Vitesse}
        \bvec{v}:=\frac{\partial \boldsymbol{\varphi}}{\partial t}.
        \end{equation}
One introduces the kinetic energy density $T(\bvec{v}, \boldsymbol{\omega})$
\begin{equation}\label{eq:kinetic_energy}    
T:=
\frac{1}{2}\bvec{v} \mathbb{A}\bvec{v} 
+
\frac{1}{2} \boldsymbol{\omega}\mathbb{J}\boldsymbol{\omega}.
\end{equation}
where $\mathbb{A}$ and $\mathbb{J}$
are diagonal inertial tensors in the mobile frame $\{\bvec{d_{i}}\}$:
$$
\mathbb{A}:=
\begin{pmatrix}
    \rho A  &   0   &   0 \\
    0   &   \rho A  &   0 \\
    0   &   0   &   \rho A
\end{pmatrix},
\quad\quad\quad
\mathbb{J}:=
\begin{pmatrix}
    \rho I_{1}  &   0   &   0 \\
    0   &   \rho I_{2}  &   0 \\
    0   &   0   &   \rho I_{3}
\end{pmatrix},
$$
$\rho$ being the mass density. As for $\mathbb{G}$ and $\mathbb{H}$, the inertial tensors $\mathbb{A}$ and $\mathbb{J}$ are both constant and uniform. One considers indeed a uniform beam.
\subsection{Strong formulation of dynamical equations}

\subsubsection{In the mobile frame}

The general equilibrium relations of Timoshenko model are (\textit{e.g.} \cite{le2018vibration,simo1986three}):
\begin{align}
\label{PFD1}
\begin{split}
\dsp\frac{\partial\bvec{N}}{\partial S}	&=\frac{\partial \, \mathbb{A} \bvec{v} }{\partial t} \\[5pt]
\dsp \frac{\partial \bvec{M}}{\partial S}+  \frac{\partial \boldsymbol{\varphi}}{\partial S} \wedge\bvec{N}	&=\frac{\partial \, \mathbb{J}\boldsymbol{\omega} }{\partial t}
\end{split}
\end{align}
However, in the present context, this system cannot be solved without introducing additional relation. The strain definition \eqref{StrainDefinition} and the stress-strain relations \eqref{LinearStressStrainRelation} are involved to obtain:
\begin{align}
\label{eqPhys2}
\begin{split}
\dsp\frac{\partial \, \mathbb{G}\boldsymbol{\varepsilon}}{\partial S}	&=\frac{\partial \, \mathbb{A} \bvec{v} }{\partial t}\\[5pt]
\dsp \frac{\partial \,\mathbb{H}\boldsymbol{\kappa}}{\partial S}+  \left(\boldsymbol{\varepsilon}+\bvec{d_3}\right) \wedge \left(\mathbb{G}\boldsymbol{\varepsilon}\right)	&=\frac{\partial \,  \mathbb{J}\boldsymbol{\omega} }{\partial t}.
\end{split}
\end{align}
\begin{rem}[Static or rigid problem]
This problem can generally be solved for quasi-static case, where the right-hand side is zero, if boundary conditions are supplied. Indeed, this quasi-static problem decomposes in the frame $\left\{\bvec{d_i}\right\}$, in a system of six-scalar ordinary, first-order, differential equations, with six unknowns $(\varepsilon_i)$ and $(\kappa_i)$. This remains true for rigid situation, where the left-hand side -- instead of the right-hand side -- is zero. This last problem is nothing else than the standard Euler dynamic equations of the rigid body. 
\end{rem}

\begin{rem}\label{remScharzVector}
In order to complete the formulation \eqref{eqPhys2}, the relation \eqref{crossderivation2} gives a first additional relation. A second relation is:
\begin{equation}\label{closurerelations2}
\frac{\partial \bvec{v}}{\partial S} -\boldsymbol{\omega}\wedge \bvec{d_3}=\frac{\partial \boldsymbol{\varepsilon} }{\partial t}, 
\end{equation}
coming from crossed derivation of $\boldsymbol{\varphi}$:
\begin{equation}
\frac{\partial }{\partial S}\frac{\partial \boldsymbol{\varphi}}{\partial t} = \frac{\partial }{\partial t}\frac{\partial \boldsymbol{\varphi}}{\partial S}    
\end{equation}
and using \eqref{StrainDefinition} and \eqref{Vitesse}:
\begin{align*}\label{crossderivationVarPhi}
\frac{\partial \bvec{v}}{\partial S}&= \frac{\partial }{\partial t}\left(\boldsymbol{\varepsilon}+\bvec{d_3}\right)\\
&=\frac{\partial \boldsymbol{\varepsilon} }{\partial t}+\boldsymbol{\omega}\wedge\bvec{d_3}.
\end{align*}
\end{rem}

From \eqref{crossderivation2}, \eqref{eqPhys2} and \eqref{closurerelations2}, a system of 4 first-order differential equations with 4 $3$-dimensional unknowns is addressed:
\begin{equation}\label{SystemPropre}
\left\{    \begin{array}{rcl}
\dsp
\frac{\partial \, \mathbb{G}\boldsymbol{\varepsilon}}{\partial S}	
&=&\dsp
\frac{\partial  \,  \mathbb{A} \bvec{v}}{\partial t}\\[6pt]
\dsp 
\frac{\partial \,\mathbb{H}\boldsymbol{\kappa}}{\partial S}+  \left(\boldsymbol{\varepsilon}+\bvec{d_3}\right) \wedge \left(\mathbb{G}\boldsymbol{\varepsilon}\right)	
&=&\dsp
\frac{\partial  \,  \mathbb{J}\boldsymbol{\omega}}{\partial t}
\\[6pt]
\dsp
\frac{\partial \, \bvec{v}}{\partial S} -\boldsymbol{\omega}\wedge \bvec{d_3}
&=&\dsp
\frac{\partial \, \boldsymbol{\varepsilon} }{\partial t}  
\\[6pt]
\dsp
\frac{\partial \, \boldsymbol{\omega} }{\partial S} + \boldsymbol{\omega}\wedge \boldsymbol{\kappa}
&=&\dsp
\frac{\partial  \,  \boldsymbol{\kappa} }{\partial t} 
    \end{array}
\right.    
\end{equation}
The two last relations are sometimes called \emph{closure} (or \emph{compatibility}) relations. 
\begin{rem}
    After projection along $\{\bvec{d_{i}}\}$, this system allows to solve $(v_i)$, $(\omega_i)$, $(\varepsilon_i)$ and $(\kappa_i)$, if boundary conditions and initial conditions are given properly. This corresponds indeed to solve equation \eqref{SystemPropre} within an \emph{algebraic} formulation.
\end{rem}
 \subsubsection{With algebraic formulation}
The system \eqref{SystemPropre} gets written in an \emph{algebraic} formalism as
\begin{equation}
    \label{SystemPropreM0}
\left\{    \begin{array}{rcl}
\dsp
\frac{\partial \, \mathbb{G}\varepsilon}{\partial S} + \kappa \wedge \mathbb{G}\varepsilon
-\omega \wedge \mathbb{A}v&=&\dsp
\frac{\partial  \,  \mathbb{A} v}{\partial t} 	\\[6pt]
\dsp 
\frac{\partial \,\mathbb{H}\kappa}{\partial S} +\kappa \wedge \mathbb{H}\kappa+  \left(\varepsilon+d_3 \right) \wedge \mathbb{G}\varepsilon
-\omega\wedge \mathbb{J}\omega
&=&\dsp
\frac{\partial  \,  \mathbb{J}\omega}{\partial t} 
\\[6pt]
\dsp
\frac{\partial \, v}{\partial S} + \kappa\wedge v
+ (\varepsilon+d_{3}) \wedge \omega 
&=&\dsp
\frac{\partial \, \varepsilon}{\partial t}  
\\[6pt]
\dsp
\frac{\partial \, \omega }{\partial S}+\kappa\wedge\omega
&=&\dsp
\frac{\partial  \,  \kappa }{\partial t} 
\end{array}
\right.    
\end{equation}
\begin{rem}
More concisely, and after multiplying the third line by $\mathbb{G}$ and the last one by $\mathbb{H}$:
\begin{equation}
    \label{SystemPropreM1}
\begin{pmatrix}
0			&0			&\mathbb{G}	&0				\\
0			&0			&0			&\mathbb{H}		\\
\mathbb{G}	&0			&0			&0				\\
0			&\mathbb{H}	&0			&0				
\end{pmatrix}
\begin{pmatrix}
v'	\\
\omega'	\\
\varepsilon'	\\
\kappa'		\\
\end{pmatrix}
+
\begin{pmatrix}
-\mathbb{W}\mathbb{A}	&0					&\mathbb{K}\mathbb{G}	&0					\\
0					& \mathbb{F}	&\mathbb{E}\mathbb{G}	&\mathbb{K}\mathbb{H}	\\
\mathbb{G}\mathbb{K}			&\mathbb{G}\mathbb{E}			&0			&0					\\
0					&\mathbb{H}\mathbb{K}			&0					&0						\\
\end{pmatrix}
\begin{pmatrix}
v	\\
\omega	\\
\varepsilon	\\
\kappa		
\end{pmatrix}
=
\begin{pmatrix}
\mathbb{A}	&0			&0			&0\\
0			&\mathbb{J}	&0			&0	\\
0			&0			&\mathbb{G}	&0			\\
0			&0			&0			&\mathbb{H}
\end{pmatrix}
\begin{pmatrix}
\dot{v}	\\
\dot{\omega}	\\
\dot{\varepsilon}	\\
\dot{\kappa}		
\end{pmatrix}
\end{equation}
where $\mathbb{F}:= j^{-1}(\mathbb{J}\boldsymbol{\omega})$ and $\mathbb{E}:= j^{-1}(\boldsymbol{\varepsilon} + \bvec{d_3})$, then 
$$
\mathbb{F}=
\begin{pmatrix}
0                   &	-\rho I_3\omega_{3}   & \rho I_2\omega_{2}	\\
\rho I_3\omega_{3}  &	0     				  &-\rho I_1\omega_{1}	\\
-\rho I_2\omega_{2}	&	\rho I_1\omega_{1}	  & 0
\end{pmatrix},
\quad \quad \quad 
\mathbb{E}
\begin{pmatrix}
0				&	-(\varepsilon_{3}+1)	&\varepsilon_{2}	\\
\varepsilon_{3}+1	&	0				&-\varepsilon_{1}	\\
-\varepsilon_{2}		&	\varepsilon_{1}		&0
\end{pmatrix}
$$
respectively. As the other matrices, they have their components given in the moving frame $\{\bvec{d_i}\}$. Vector's component derivations are written with the following convention (here for $v$)
$$
v'=
\begin{pmatrix}
\frac{\partial v_{1}}{\partial S}	\\[4pt]
\frac{\partial v_{2}}{\partial S}	\\[4pt]
\frac{\partial v_{3}}{\partial S}	\\
\end{pmatrix}
\quad\quad\quad
\dot{v}=
\begin{pmatrix}
\frac{\partial v_{1}}{\partial t}	\\[4pt]
\frac{\partial v_{2}}{\partial t}	\\[4pt]
\frac{\partial v_{3}}{\partial t}	\\
\end{pmatrix}.
$$
Let us introduce 
$$
u:=\begin{pmatrix}
    v\\
    \omega\\
    \varepsilon\\
    \kappa
\end{pmatrix},
\quad
\mathcal{D}:=
\begin{pmatrix}
0			&0			&\mathbb{G}	&0				\\
0			&0			&0			&\mathbb{H}		\\
\mathbb{G}	&0			&0			&0				\\
0			&\mathbb{H}	&0			&0
\end{pmatrix},
\quad
\mathcal{Y}:=
\begin{pmatrix}
-\mathbb{W}\mathbb{A}	&0					&\mathbb{K}\mathbb{G}	&0					\\
0					&\mathbb{F}	&\mathbb{E}\mathbb{G}	&\mathbb{K}\mathbb{H}	\\
\mathbb{G}\mathbb{K}			&\mathbb{G}\mathbb{E}			&0			&0					\\
0					&\mathbb{H}\mathbb{K}			&0					&0						\\
\end{pmatrix},
\quad
\mathcal{M}:=
\begin{pmatrix}
\mathbb{A}	&0			&0			&0\\
0			&\mathbb{J}	&0			&0	\\
0			&0			&\mathbb{G}	&0			\\
0			&0			&0			&\mathbb{H}
\end{pmatrix}
$$
where $\mathcal{D}$ is symmetric and $\mathcal{M}$ is diagonal; both are constant and uniform, this is not the case of the skew-symmetric matrix $\mathcal{Y}$ that is $u(S,t)$-dependent. The system \eqref{SystemPropreM1} becomes a \emph{transport equation}:
\begin{align}\label{SystemPropreM2}
\mathcal{D}u'+\mathcal{Y}u=\mathcal{M}\dot{u}
\end{align}
\end{rem}

\subsection{Determining the kinematics}\label{sec:kinematics}
We highlight that solving \eqref{SystemPropre} (or \eqref{SystemPropreM2}) is not sufficient to determine $\{\bvec{v},\, \boldsymbol{\omega},\, \boldsymbol{\varepsilon},\, \boldsymbol{\kappa} \}.$ Indeed, even if components $\{v, \, \omega, \, \varepsilon, \, \kappa \}$ are known, it is not the case of the local frame $\{\bvec{d_{i}}\}$. \\
Therefore one has to determine not only $\bvec{R}(S,t)$ or $\{\bvec{d_{i}}(S,t)\}$ (being equivalent according to \eqref{DEFrotation}) but also the placement $\boldsymbol{\varphi}(S,t)$ (or $\varphi(S,t)$) too. Let us first begin with this latter. 
\subsubsection{Determining the placement}
The placement $ \boldsymbol{\varphi}$ has to satisfy both \eqref{StrainDefinition} and \eqref{Vitesse}. The \emph{closure} relation \eqref{closurerelations2} ensures that if one of this relation is satisfied, the other is satisfied up to a constant.  Then $ \boldsymbol{\varphi}(S,t)$ has to satisfy  
$$
\textrm{\underline{either}}\quad\quad \frac{\partial\, \boldsymbol{\varphi} }{\partial S}=\boldsymbol{\varepsilon}+\bvec{d_{3}}
\quad\quad\quad\textrm{\underline{or}}\quad\quad
\frac{\partial\, \boldsymbol{\varphi} }{\partial t}=\bvec{v}.
$$
As $\{v, \, \omega, \, \varepsilon, \, \kappa \}$ are supposed to be known, these relations write:
$$
\varphi'+\kappa\wedge\varphi=\varepsilon+d_{3}
\quad\quad\quad\textrm{or}\quad\quad
\dot{\varphi}+\omega\wedge\varphi=v.
$$
Any of these differential relations can be used to determine $\varphi(S,t)$. More precisely, left-side case can be solved if a boundary condition $\varphi(0,t)$ is given and right-side case can be solved if an initial condition $\varphi(S,0)$ is prescribed. 
\subsubsection{Determining the orientation of the cross-section}
Determining the frame $\{\bvec{d_{i}}\}$ can be explicitly performed if an other frame is known. In our case the Cartesian frame $\{\bvec{e_{I}}\}$ is chosen as reference. In other words, one has to determine the components $(d_{Ii}):=(d_{xi},d_{yi},d_{zi})$ of $\bvec{d_{i}}=d_{Ii}\bvec{e_{I}}$. In the Cartesian frame, solving 
$$
\frac{\partial \bvec{d_{i}}}{\partial t}=\boldsymbol{\omega}\wedge\bvec{d_{i}}
$$
is equivalent to solve 
$$
\frac{\partial}{\partial t}
\begin{pmatrix}
d_{x1}	&	d_{x2}	&	d_{x3}	\\
d_{y1}	&	d_{y2}	&	d_{y3}	\\
d_{z1}	&	d_{z2}	&	d_{z3}	
\end{pmatrix}
=
\begin{pmatrix}
d_{x1}	&	d_{x2}	&	d_{x3}	\\
d_{y1}	&	d_{y2}	&	d_{y3}	\\
d_{z1}	&	d_{z2}	&	d_{z3}	
\end{pmatrix}
\begin{pmatrix}
0	&	\omega_{3}	&	-\omega_{2}	\\
-\omega_{3}	&	0	&	\omega_{1}	\\
\omega_{2}	&	-\omega_{1}	&	0	
\end{pmatrix}
$$
where the matrix $\mathbb{R}:= [d_{Ii}]$ is nothing else than the matrix representation of $\bvec{R}$ in the Cartesian frame $\{\bvec{e_I}\}$. This provides the concise relation:
$$
\frac{\partial \mathbb{R}}{\partial t} - \mathbb{R} \mathbb{W}=0
$$
Solving this systems allows to determine $\bvec{R}$ and the components of $\{\bvec{d_{i}}\}$ in the Cartesian frame, if initial conditions are given for the directors along $S$. \\
As for $\boldsymbol{\varphi}$, another integration choice can be used by exploiting space derivation. In that case $\bvec{R}$ and $\{\bvec{d_{i}}\}$ can be determined by solving 
$$
\frac{\partial \mathbb{R}}{\partial S} - \mathbb{R}\mathbb{K}=0
$$
if a boundary condition is given for the frame $\{\bvec{d_{i}}(0,t)\}$ at any time. 
\subsection{Conservation of energy}\label{sec:conservation}
In this section, we use the formulation \eqref{SystemPropreM2} to give a concise proof of the conservation of energy in Timoshenko model.
\begin{theorem}
    The conservation of energy (strain energy and kinetic energy) is ensured if boundary conditions are chosen properly. Using components, this is formulated by
    \begin{align}\label{eqPhys2M}
\frac{\partial}{\partial t}
\int_{0}^{L}
\frac{1}{2}{v}\mathbb{A}{v} 
+
\frac{1}{2}{\omega}\mathbb{J}{\omega} 
+
\frac{1}{2}{\varepsilon}\mathbb{G}{\varepsilon}
+
\frac{1}{2}{\kappa}\mathbb{H}{\kappa}
\, \dd S
=
\Big[
v\mathbb{G}\varepsilon
+
\omega\mathbb{H}\kappa
\Big]_{0}^{L}.
\end{align}
\end{theorem}
\begin{proof}
By projecting equation \eqref{SystemPropreM2} along the vector $u$ through the usual Euclidean scalar product and integrating along the fiber:
\begin{equation}\label{ProjectionU}
    \int_0^L u \cdot \mathcal{D}u'+ u \cdot \mathcal{Y}u - u \cdot \mathcal{M}\dot{u} \,\dd S = 0.
\end{equation}
First, direct integration delivers:
\begin{align*}
    \int_0^L u \cdot \mathcal{D} u^{'} \text{d}S &= \int_0^L v \cdot \mathbb{G} \varepsilon' + \omega \cdot \mathbb{H} \kappa' + \varepsilon \cdot \mathbb{G} v' + \kappa \cdot \mathbb{H} \omega' \text{d}S\\
    &= \left[v \mathbb{G} \varepsilon + \omega \mathbb{H} \kappa\right]_0^L.
\end{align*}
The second term vanishes thanks to the property of vectorial product:
\begin{align*}
    \int_0^L u \cdot \mathcal{Y} u \text{d}S &= -\int_0^L v \cdot \mathbb{W}\mathbb{A} v + \omega \cdot \mathbb{F} \omega \text{d}S\\
    &= -\int_0^L (\mathbb{A}v) \cdot ( \omega \wedge  v ) + \omega \cdot (\omega \wedge (\mathbb{J} \omega ) ) \text{d}S\\
    &=0.
\end{align*}
The last part provides:
\begin{equation*}
    \int_0^L u \cdot \mathcal{M} \dot{u} \dd S = \frac{\partial }{\partial t}  \int_0^L \frac{1}{2}\left( v \cdot \mathbb{A} v + \omega \cdot \mathbb{J} \omega + \varepsilon \cdot  \mathbb{G} \varepsilon + \kappa  \cdot \mathbb{H} \kappa \right)\text{d} S.
\end{equation*}
Hence \eqref{ProjectionU} gives the desired result.
\end{proof}

\begin{rem}[Mobile reformulation] The relation \eqref{eqPhys2M} has the mobile frame counterpart
\begin{align}\label{eqPhys2V}
\frac{\partial}{\partial t}
\int_{0}^{L}
\frac{1}{2}\bvec{v}\mathbb{A}\bvec{v} 
+
\frac{1}{2}\boldsymbol{\omega}\mathbb{J}\boldsymbol{\omega} 
+
\frac{1}{2}\boldsymbol{\varepsilon}\mathbb{G}\boldsymbol{\varepsilon}
+
\frac{1}{2}\boldsymbol{\kappa}\mathbb{H}\boldsymbol{\kappa}
\, \dd S
=
\Big[
\bvec{v}\mathbb{G}\boldsymbol{\varepsilon}
+
\boldsymbol{\omega}\mathbb{H}\boldsymbol{\kappa}
\Big]_{0}^{L}
\end{align}
\end{rem}
Moreover, particular attention must be paid on time derivation of strain energy. Indeed, quadratic terms in mobile and algebraic formulation are derived the same way with respect to time:
\begin{equation}
    \begin{array}{llll}
     \dsp \frac{\partial }{\partial t}\left(\frac{1}{2}\boldsymbol{\varepsilon}\mathbb{G}\boldsymbol{\varepsilon}\right)&=\dsp 
\left( \frac{\partial \, \boldsymbol{\varepsilon} }{\partial t}    -   
\boldsymbol{\omega}  \wedge \boldsymbol{\varepsilon}\right) 	  \cdot (\mathbb{G}\boldsymbol{\varepsilon})
&=\dsp 
\frac{\partial \, \varepsilon }{\partial t}   \cdot (\mathbb{G}\varepsilon)
&=\dsp 
\frac{\partial}{\partial t} \left( \frac{1}{2}\varepsilon \mathbb{G}\varepsilon\right)\\
    \dsp \frac{\partial }{\partial t}\left(\frac{1}{2}\boldsymbol{\kappa}\mathbb{H}\boldsymbol{\kappa}\right)&=\dsp 
\left( \frac{\partial \, \boldsymbol{\kappa} }{\partial t}    -   
\boldsymbol{\omega}  \wedge \boldsymbol{\kappa}\right) 	  \cdot (\mathbb{H}\boldsymbol{\kappa})
&=\dsp 
\frac{\partial \, \kappa }{\partial t}   \cdot (\mathbb{H}\kappa)
&=\dsp 
\frac{\partial}{\partial t} \left( \frac{1}{2}\kappa \mathbb{H}\kappa\right) 
\end{array}
\end{equation}
due to symmetry of rigidity tensors $\mathbb{G}$ and $\mathbb{H}$.

\subsection{Variational principle}\label{sec:VariationalPrinciple1}

The Lagrangian density $\ell(\bvec{v},\boldsymbol{\omega},\boldsymbol{\varepsilon},\boldsymbol{\kappa})$ and Lagrangian  $\mathcal{L}(\bvec{v},\boldsymbol{\omega},\boldsymbol{\varepsilon},\boldsymbol{\kappa})$ 
associated to this problem are  in the \emph{mobile} convention:
\begin{align}\label{LagrangianV}
\ell(\bvec{v},\boldsymbol{\omega},\boldsymbol{\varepsilon},\boldsymbol{\kappa}) &:= 
\frac{1}{2}\bvec{v}\mathbb{A}\bvec{v} 
+
\frac{1}{2}\boldsymbol{\omega}\mathbb{J}\boldsymbol{\omega} 
-
\frac{1}{2}\boldsymbol{\varepsilon}\mathbb{G}\boldsymbol{\varepsilon}
-
\frac{1}{2}\boldsymbol{\kappa}\mathbb{H}\boldsymbol{\kappa},
&
\mathcal{L}(\bvec{v},\boldsymbol{\omega},\boldsymbol{\varepsilon},\boldsymbol{\kappa}) &:= 
\int_{0}^{L}
\ell(\bvec{v},\boldsymbol{\omega},\boldsymbol{\varepsilon},\boldsymbol{\kappa}) 
\, \dd S
\end{align}
Accordingly, the \emph{action} $\mathcal{S}$ is:
$$
\mathcal{S}:=\int_{t_{1}}^{t_{2}}\mathcal{L}\, \dd t.
$$
\begin{theorem}\label{thm:weak_formulation}
Under suitable boundary conditions, Hamilton's Principle $\delta\mathcal{S}=0$ is equivalent to 
\begin{align}\label{eq:importante}
\int_{t_{1}}^{t_{2}} \int_{0}^{L}
\delta\boldsymbol{\varphi}\cdot \left(
\frac{\partial \, \mathbb{G}\boldsymbol{\varepsilon}}{\partial S}	
-\frac{\partial \, \mathbb{A}\bvec{v}}{\partial t}
\right)
+
 \delta\boldsymbol{\theta}\cdot\left(
 \frac{\partial\, \mathbb{H}\boldsymbol{\kappa}}{\partial S}
 +
\frac{\partial \boldsymbol{\varphi}}{\partial S}\wedge(\mathbb{G}\boldsymbol{\varepsilon})
-
\frac{\partial \, \mathbb{J}\boldsymbol{\omega}}{\partial t } 
\right)
\, \dd S \dd t	= 0.
\end{align}
\end{theorem}

\begin{proof}
Perturbation of the action
\begin{align}\label{LeastActionPrinciple}
\int_{t_{1}}^{t_{2}} \int_{0}^{L}\delta\ell\, \dd S \dd t &= 0.
\end{align}
is performed according to \eqref{PerturbHermitianProd} applied on each quadratic forms, hence \eqref{LeastActionPrinciple} becomes
\begin{align}\label{eq:ham_princ_var1}
\begin{split}
\int_{t_{1}}^{t_{2}} \int_{0}^{L}
\left(\mathbb{A}\bvec{v}\right)\cdot\left(\delta \bvec{v}-\delta\boldsymbol{\theta}\wedge\bvec{v} \right) 
-\left(\mathbb{G}\boldsymbol{\varepsilon}\right)\cdot\left(\delta\boldsymbol{\varepsilon}-\delta\boldsymbol{\theta}\wedge\boldsymbol{\varepsilon}\right)	\\
+
\left(\mathbb{J}\boldsymbol{\omega}\right)\cdot\Big(\delta\boldsymbol{\omega}-\delta\boldsymbol{\theta}\wedge\boldsymbol{\omega}\Big)-
\left(\mathbb{H}\boldsymbol{\kappa}\right)\cdot\Big(\delta\boldsymbol{\kappa}-\delta\boldsymbol{\theta}\wedge\boldsymbol{\kappa}\Big)
\, \dd S \dd t   &= 0.
\end{split}
\end{align}
From Schwarz Lemma,
\begin{align}\label{unpetitdebut}
\delta\bvec{v}
=\frac{\partial \delta\boldsymbol{\varphi}}{\partial t},
\quad\quad
\delta\boldsymbol{\varepsilon}&=\frac{\partial \delta\boldsymbol{\varphi}}{\partial S}-\delta\bvec{d_3}
=\frac{\partial \delta\boldsymbol{\varphi}}{\partial S}-\delta\boldsymbol{\theta}\wedge\bvec{d_3}
\end{align}
and one obtains thanks to \eqref{eq:var_omega_kappa}:
\begin{align}
\begin{split}
\int_{t_{1}}^{t_{2}} \int_{0}^{L}
\left(\mathbb{A}\bvec{v}\right)\cdot\left(\frac{\partial \delta\boldsymbol{\varphi}}{\partial t}-\delta\boldsymbol{\theta}\wedge\frac{\partial \boldsymbol{\varphi}}{\partial t}\right) 
-\left(\mathbb{G}\boldsymbol{\varepsilon}\right)\cdot\left(\frac{\partial \delta\boldsymbol{\varphi}}{\partial S}-\delta\boldsymbol{\theta}\wedge\frac{\partial \boldsymbol{\varphi}}{\partial S}\right)	\\
+
\left(\mathbb{J}\boldsymbol{\omega}\right)\cdot\frac{\partial\, \delta\boldsymbol{\theta}}{\partial t}-
\left(\mathbb{H}\boldsymbol{\kappa}\right)\cdot\frac{\partial\, \delta\boldsymbol{\theta}}{\partial S}
\, \dd S \dd t   &= 0.	\\
\end{split}
\end{align}
Since $\mathbb{A}=\rho A  \mathbb{1}$, $\mathbb{A}\bvec{v}$ and $\frac{\partial \boldsymbol{\varphi}}{\partial t}=\bvec{v}$ are colinear. Therefore,
\begin{align}
\left(\mathbb{A}\bvec{v}\right)\cdot\left(\delta\boldsymbol{\theta}\wedge\frac{\partial \boldsymbol{\varphi}}{\partial t}\right) =0.
\end{align}
As a consequence:
\begin{align}\label{eq:ham_princ_2}
\begin{split}
\int_{t_{1}}^{t_{2}} \int_{0}^{L}
\frac{\partial \, \delta\boldsymbol{\varphi}}{\partial t}\cdot(\mathbb{A}\bvec{v}) - 
\frac{\partial \, \delta\boldsymbol{\varphi}}{\partial S}\cdot(\mathbb{G}\boldsymbol{\varepsilon})	
+
\left(\delta\boldsymbol{\theta}\wedge\frac{\partial \, \boldsymbol{\varphi}}{\partial S}\right)\cdot(\mathbb{G}\boldsymbol{\varepsilon})
+
\frac{\partial \, \delta\boldsymbol{\theta}}{\partial t }\cdot(\mathbb{J}\boldsymbol{\omega}) -
\frac{\partial \, \delta\boldsymbol{\theta}}{\partial S} \cdot(\mathbb{H}\boldsymbol{\kappa})
\, \dd S \dd t   &= 0	
\end{split}
\end{align}
By integration by parts:
\begin{align}
\begin{split}
\int_{t_{1}}^{t_{2}} \int_{0}^{L}
- \delta\boldsymbol{\varphi}\cdot \frac{\partial \, \mathbb{A}\bvec{v}}{\partial t}
+ 
\delta\boldsymbol{\varphi}\cdot\frac{\partial \, \mathbb{G}\boldsymbol{\varepsilon}}{\partial S}	
+
\left(\delta\boldsymbol{\theta}\wedge\frac{\partial \boldsymbol{\varphi}}{\partial S}\right)\cdot(\mathbb{G}\boldsymbol{\varepsilon})
-
\delta\boldsymbol{\theta}\cdot\frac{\partial \, \mathbb{J}\boldsymbol{\omega}}{\partial t } 
+
\ \delta\boldsymbol{\theta} \cdot \frac{\partial\, \mathbb{H}\boldsymbol{\kappa}}{\partial S}
\, \dd S \dd t   &	\\
+\int_{0}^{L}
\Big[
\delta\boldsymbol{\varphi}\cdot(\mathbb{A}\bvec{v})
+
\delta\boldsymbol{\theta}\cdot(\mathbb{J}\boldsymbol{\omega})
 \Big]_{t_{1}}^{t_{2}}\,\dd S
-
\int_{t_{1}}^{t_{2}}
\Big[
 \delta\boldsymbol{\varphi} \cdot (\mathbb{G}\boldsymbol{\varepsilon})
 +
\delta\boldsymbol{\theta} \cdot (\mathbb{H}\boldsymbol{\kappa})
 \Big]_{0}^{L}\,\dd t&=0.
\end{split}
\end{align}
Since $\delta\boldsymbol{\varphi}(S,t_{i})=0$ and $\delta\boldsymbol{\theta}(S,t_{i})=0$ for $t_{i} \in \{t_{1},\, t_{2}\},$ one obtains after a cyclic permutation of the mixed product:
\begin{align}
\begin{split}
\int_{t_{1}}^{t_{2}} \int_{0}^{L}
\delta\boldsymbol{\varphi}\cdot \left(
\frac{\partial \, \mathbb{G}\boldsymbol{\varepsilon}}{\partial S}	
-\frac{\partial \, \mathbb{A}\bvec{v}}{\partial t}
\right)
+
 \delta\boldsymbol{\theta}\cdot\left(
 \frac{\partial\, \mathbb{H}\boldsymbol{\kappa}}{\partial S}
 +
\frac{\partial \boldsymbol{\varphi}}{\partial S}\wedge(\mathbb{G}\boldsymbol{\varepsilon})
-
\frac{\partial \, \mathbb{J}\boldsymbol{\omega}}{\partial t } 
\right)
\, \dd S \dd t   &	=\quad \quad \hfill \\
\quad \hfill \quad \quad \quad \int_{t_{1}}^{t_{2}}
\Big[
 \delta\boldsymbol{\varphi} \cdot (\mathbb{G}\boldsymbol{\varepsilon})
 +
\delta\boldsymbol{\theta} \cdot (\mathbb{H}\boldsymbol{\kappa})
 \Big]_{0}^{L}\,\dd t
\end{split}
\end{align}
\end{proof}

\begin{corollary}[Strong formulation]
Under suitable boundary conditions, Hamilton's Principle $\delta\mathcal{S}=0$ is equivalent to the strong formulation
\begin{align}\label{eq:equillibrium}
\begin{split}
\frac{\partial \, \mathbb{G}\boldsymbol{\varepsilon}}{\partial S}	
&=
\frac{\partial \, \mathbb{A}\bvec{v}}{\partial t}
\\
 \frac{\partial\, \mathbb{H}\boldsymbol{\kappa}}{\partial S}
 +
\frac{\partial \boldsymbol{\varphi}}{\partial S}\wedge(\mathbb{G}\boldsymbol{\varepsilon})
&=
\frac{\partial \, \mathbb{J}\boldsymbol{\omega}}{\partial t }.
\end{split}
\end{align}
\end{corollary}

\begin{proof}
We choose test functions $\delta \boldsymbol{\theta}$ and $\delta \boldsymbol{\varphi}$ such that, at any time $t$ and at $S=0$ and $S=L,$
\begin{equation}\label{eq:BCCool}
    <\delta \boldsymbol{\varphi}, \mathbb{G}\boldsymbol{\varepsilon}> = <\delta\boldsymbol{\theta}, \mathbb{H}\boldsymbol{\kappa}> = 0.
\end{equation}
As this relation is true at any time $t_1$ and $t_2$ and for any $\delta \boldsymbol{\theta}$ and $\delta \boldsymbol{\varphi},$ one gets out of theorem \ref{thm:weak_formulation} the desired strong formulation \eqref{eq:equillibrium}.
\end{proof}
In that sense the variable $\delta \boldsymbol{\theta}$ -- involved in the weak formulation \eqref{eq:importante} -- is related to Euler-Poincar\'e reduction \cite{poincare1901forme}. 

\section{Hamiltonian mechanics of Timoshenko model}\label{sec:ham_mechanics_tim}

The Lagrangian has been initially introduced with secondary variables: the velocity $\bvec{v},$ the spin $\boldsymbol{\omega},$ the curvature $\boldsymbol{\kappa}$ and the strain $\boldsymbol{\varepsilon}.$ However, weak formulation \eqref{eq:importante} and equilibrium relations \eqref{eq:equillibrium} have been obtained by introducing test functions $\delta \boldsymbol{\theta}$ and $\delta \boldsymbol{\varphi}$ associated to primary kinematical variables $\boldsymbol{\varphi}$ and $\bvec{R}$. This motivates the study we lead in the present section, considering dual variables in order to obtain three different Hamiltonian formulations of Timoshenko model. This section contains the main results of this work.

\subsection{A first Hamiltonian formulation of Timoshenko model}

It turns out that, by looking for a Hamiltonian formulation of equation \eqref{SystemPropreM0}, \textit{i.e.} by using algebraic notation, the associated Poisson structure has been already presented (see Theorem 6.2 and Corollary 6.3 of \cite{simo1988hamiltonian}). \\
Let us set $p  = \mathbb{A} v$ and $\sigma = \mathbb{J}\omega.$ The system with variables $(p , \sigma, \varepsilon, \kappa)$ gets written as
\begin{equation}\label{eq:comme_mar}
\left\{    \begin{array}{rcl}
\dsp
\frac{\partial \, \mathbb{G}\varepsilon}{\partial S} + \kappa \wedge (\mathbb{G}\varepsilon)
-(\mathbb{J}^{-1} \sigma) \wedge p &=&\dsp
\frac{\partial  \,  p }{\partial t} 	\\[6pt]
\dsp 
\frac{\partial \,\mathbb{H}\kappa}{\partial S} +\kappa \wedge (\mathbb{H}\kappa)+  \left(\varepsilon+d_3 \right) \wedge (\mathbb{G}\varepsilon)
-(\mathbb{J}^{-1}\sigma)\wedge \sigma
&=&\dsp
\frac{\partial  \,  \sigma}{\partial t} 
\\[6pt]
\dsp
\frac{\partial \, \mathbb{A}^{-1} p }{\partial S} + \kappa\wedge (\mathbb{A}^{-1} p )
+ (\varepsilon+d_{3}) \wedge (\mathbb{J}^{-1}\sigma) 
&=&\dsp
\frac{\partial \, \varepsilon}{\partial t}  
\\[6pt]
\dsp
\frac{\partial \, \mathbb{J}^{-1}\sigma }{\partial S}+\kappa\wedge(\mathbb{J}^{-1}\sigma)
&=&\dsp
\frac{\partial  \,  \kappa }{\partial t} 
    \end{array}
\right.     
\end{equation}

\begin{theorem}\label{thm:old_ham_form}
The system \eqref{eq:comme_mar} is Hamiltonian for the Hamiltonian
\begin{equation}\label{eq:Hamilton1}
        H(p , \sigma, \varepsilon, \kappa) = \int_{0}^{L}
\frac{1}{2}{p }\mathbb{A}^{-1}{p } 
+
\frac{1}{2}{\sigma}\mathbb{J}^{-1}{\sigma} 
+
\frac{1}{2}{\varepsilon}\mathbb{G}{\varepsilon}
+
\frac{1}{2}{\kappa}\mathbb{H}{\kappa}
\, \dd S
\end{equation}
and the Poisson bracket, given for any $f$ and $g$ functions of the variables $\{p,\sigma,\varepsilon,\kappa\}$: 
  \begin{subequations}\label{eqn:main}
    \begin{alignat}{2}
  \{f,g\} = \int_0^L &  < \frac{\partial f}{\partial p }, \frac{\partial }{\partial S}\left( \frac{\partial g}{\partial \varepsilon} \right)> - < \frac{\partial g}{\partial p }, \frac{\partial }{\partial S}\left( \frac{\partial f}{\partial \varepsilon} \right)> \label{subeqn:a}\\
    +& < \frac{\partial f}{\partial \sigma}, \frac{\partial }{\partial S}\left( \frac{\partial g}{\partial \kappa} \right)> - < \frac{\partial g}{\partial \sigma}, \frac{\partial }{\partial S}\left( \frac{\partial f}{\partial \kappa} \right)> \label{subeqn:b}\\
    +&<\kappa, \frac{\partial g}{\partial \varepsilon} \wedge \frac{\partial f}{\partial p} - \frac{\partial f}{\partial \varepsilon} \wedge \frac{\partial g}{\partial p}> \label{subeqn:c}\\ 
    +& < \sigma, \frac{\partial g}{\partial \sigma} \wedge \frac{\partial f}{\partial \sigma}>
    +<p , \frac{\partial g}{\partial \sigma} \wedge \frac{\partial f}{\partial p } - \frac{\partial f}{\partial \sigma} \wedge \frac{\partial g}{\partial p }> \label{subeqn:d}\\ 
    +&<\varepsilon + d_3, \frac{\partial g}{\partial \varepsilon} \wedge \frac{\partial f}{\partial \sigma} - \frac{\partial f}{\partial \varepsilon} \wedge \frac{\partial g}{\partial \sigma}>
    +<\kappa, \frac{\partial g}{\partial \kappa} \wedge \frac{\partial f}{\partial \sigma} - \frac{\partial f}{\partial \kappa} \wedge \frac{\partial g}{\partial \sigma}> \dd S. \label{subeqn:e}
    \end{alignat}
  \end{subequations}
\end{theorem}

\begin{proof}[Sketch of the proof]
We rewrite the Poisson bracket using integration by parts and properties of the cross product:
\begin{align}\begin{split}
    \{f,g\} = \int_0^L &  < \frac{\partial f}{\partial p }, \frac{\partial }{\partial S}\left( \frac{\partial g}{\partial \varepsilon} \right) + p \wedge \frac{\partial g}{\partial \sigma} +   \kappa \wedge \frac{\partial g}{\partial \varepsilon}>\\ 
    +& < \frac{\partial f}{\partial \sigma}, \frac{\partial }{\partial S}\left( \frac{\partial g}{\partial \kappa} \right) + \sigma \wedge \frac{\partial g}{\partial \sigma} + (\varepsilon + d_3) \wedge \frac{\partial g}{\partial \varepsilon} + \kappa \wedge \frac{\partial g}{\partial \kappa}>\\
    +& <\frac{\partial f}{\partial \varepsilon}, \frac{\partial }{\partial S}\left(\frac{\partial g}{\partial p }\right) + (\varepsilon + d_3) \wedge \frac{\partial g}{\partial \sigma} + \kappa \wedge \frac{\partial g}{\partial p} >\\ 
    +& <\frac{\partial f}{\partial \kappa},  \frac{\partial }{\partial S}\left( \frac{\partial g}{\partial \sigma}\right) + \kappa \wedge \frac{\partial g}{\partial \sigma}>\dd S.
\end{split}\end{align}
With the derivatives of the Hamiltonian
\begin{equation}
\begin{array}{ll}
    \dsp \frac{\partial H}{\partial p} =\dsp  \mathbb{A}^{-1}p 
    &\quad\quad\quad\quad\quad\quad
    \dsp \frac{\partial H}{\partial \sigma} =\dsp  \mathbb{J}^{-1}{\sigma} \\[8pt]
    \dsp \frac{\partial H}{\partial \varepsilon} =\dsp  \mathbb{G}{\varepsilon}
    &\quad\quad\quad\quad\quad\quad
    \dsp \frac{\partial H}{\partial \kappa} =\dsp  \mathbb{H}{\kappa},
\end{array}
\end{equation}
we recover easily the equations \eqref{eq:comme_mar}.
\end{proof}

\begin{rem}
    Since any Hamiltonian system is conservative, the conservation of energy obtained in section \ref{sec:conservation} is recovered as a consequence of theorem \ref{thm:old_ham_form}.
\end{rem}
\begin{rem}
The two first lines \eqref{subeqn:a} and \eqref{subeqn:b} are the analogs in our context of a canonical Poisson bracket in mechanics and the third line \eqref{subeqn:c} can be interpreted as an interaction term in between the three variables $(p, \varepsilon, \kappa).$ The two last lines \eqref{subeqn:d}-\eqref{subeqn:e} are the canonical Lie-Poisson bracket on the dual of the semidirect product Lie algebra $(\mathbb{R}^3)^{[0,L]} \ltimes \left( (\mathbb{R}^3)^{[0,L]} \times (\mathbb{R}^3)^{[0,L]} \times (\mathbb{R}^3)^{[0,L]} \right)$ where $\sigma$ is the acting variable and the infinitesimal action is given by the cross product.\\
We refer to the articles \cite{MarwanLie,marsden1984semidirect,holm1998euler} for a study of semidirect products occurring in mechanics and to the lecture \cite{marsden1984reduction} for a nice introduction to the topic.
\end{rem}
However, this Poisson bracket carries many terms and could be heavy to manipulate. On behalf of the authors of the present paper, the main reason is that the algebraic notation is used instead of material coordinates. Furthermore, the set of variables $(p , \sigma, \varepsilon, \kappa)$ is not sufficient to determine the kinematics as it as been mentioned in section \ref{sec:kinematics}. In the sequel, we use material coordinates to provide alternative points of view on Hamiltonian formulations of Timoshenko model.

\subsection{Configuration space and Euler-Lagrange equations}
Now, we use the space of position and velocities $TC = \{(\boldsymbol{\varphi}, \bvec{R},\delta\boldsymbol{\varphi}, \delta\bvec{R} )\} $ and the space of position and momenta $T^* C = \{(\boldsymbol{\varphi}, \bvec{R}, \bvec{p} , \Sigma) \}$ of the configuration space $C = \{(\boldsymbol{\varphi}, \bvec{R}) \}$ to explicit the geometric belonging of each variable.  \\
\\
Let us first introduce the space of configurations and its associated space of velocities.

\begin{definition}[Configuration and velocity spaces]\label{def:configuration_space}
    The configuration space is 
    \begin{equation}
        C= \{(\boldsymbol{\varphi},\bvec{R}) \colon [0,L] \to \mathbb{R}^3 \times SO(3) \}.
    \end{equation}
    The velocity space is the tangent bundle $TC$ of $C$:
    \begin{equation}
        TC = \{(\delta\boldsymbol{\varphi}, \delta\bvec{R}) \colon [0,L] \to \mathbb{R}^3 \times T_{\bvec{R}}SO(3), (\boldsymbol{\varphi}, \bvec{R}) \in C\}.
    \end{equation}
\end{definition}

\begin{rem}\label{rem:matrix_so3}
    Writing any element of the Lie group $SO(3)$ as a matrix allows to represent any tangent vector $\delta \bvec{R} \in T_{\bvec{R}}SO(3)$ at $\bvec{R}$ as an element of the form
    \begin{equation}
        \delta \bvec{R} = \frac{\partial \bvec{R}(\nu)}{\partial \nu}_{|\nu=0}
    \end{equation}
    where $(\bvec{R}(\nu))_{\nu  \in \mathbb{R}}$ is any $\nu$-dependent $SO(3)$ matrix valuing $\bvec{R}(0) = \bvec{R}$ at $0.$ 
\end{rem}

At this stage, we introduce the action
\begin{equation}
    \mathcal{S}\left( (\boldsymbol{\varphi}(t),\bvec{R}(t))_{t_1 \leq t \leq t_2} \right) = \int_{t_1}^{t_2} \mathcal{L}(\boldsymbol{\varphi}, \bvec{R}, \frac{\partial \boldsymbol{\varphi} }{\partial t} , \frac{\partial \bvec{R}}{\partial t} ) \dd t  
\end{equation}
where $\mathcal{L} \colon TC \to \mathbb{R}$ is the Lagrangian density. Evaluated on a given path $(\boldsymbol{\varphi}(t),\bvec{R}(t))_{t_1 \leq t \leq t_2},$ $\mathcal{L}$ is defined as
\begin{equation}
\begin{split}
\mathcal{L}(\boldsymbol{\varphi}, \bvec{R}, \frac{\partial \boldsymbol{\varphi} }{\partial t} , \frac{\partial \bvec{R}}{\partial t} ) &= \frac{1}{2}\int_{0}^{L} <\frac{\partial \boldsymbol{\varphi} }{\partial t} ,\mathbb{A}\frac{\partial \boldsymbol{\varphi}}{\partial t} >
+
 <j(\bvec{R}^{-1} \frac{\partial \bvec{R}}{\partial t}  ),\mathbb{J} j(\bvec{R}^{-1} \frac{\partial \bvec{R}}{\partial t}  )> \\
&\quad\quad -<\frac{\partial \boldsymbol{\varphi} }{\partial S} -\boldsymbol{\bvec{d_3}},\mathbb{G}(\frac{\partial \boldsymbol{\varphi} }{\partial S} -\boldsymbol{\bvec{d_3}})>- <j(\bvec{R}^{-1} \frac{\partial \bvec{R}}{\partial S}  ), \mathbb{H} j(\bvec{R}^{-1} \frac{\partial \bvec{R}}{\partial S}  )> \dd S.
\end{split}
\end{equation}

The results of section \ref{sec:VariationalPrinciple1} are reformulated through the following theorem.

\begin{theorem}
The Euler-Lagrange equations associated to the action $S$ are the equilibrium relations \eqref{eq:equillibrium}.
\end{theorem}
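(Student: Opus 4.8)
The plan is to show that the first variation of the action $\mathcal{S}$, taken with respect to independent perturbations of $\boldsymbol{\varphi}$ and of $\bvec{R}$, coincides with the variational computation already carried out in Section~\ref{sec:VariationalPrinciple1}, so that its Euler--Lagrange equations are precisely \eqref{eq:equillibrium}. First I would record the dictionary: setting $\bvec{v}=\dot{\boldsymbol{\varphi}}$, $\boldsymbol{\varepsilon}=\boldsymbol{\varphi}'-\bvec{d_3}$, $\boldsymbol{\omega}=j(\bvec{R}^{-1}\dot{\bvec{R}})$ and $\boldsymbol{\kappa}=j(\bvec{R}^{-1}\bvec{R}')$, the integrand defining $\mathcal{L}$ on $TC$ is literally the mobile-frame Lagrangian density $\ell$ of \eqref{LagrangianV}. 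Hence $\delta\mathcal{S}=0$ over all variations fixing the endpoints $t_1,t_2$ is exactly \eqref{LeastActionPrinciple}, and it suffices to rerun the argument of Section~\ref{sec:VariationalPrinciple1} while keeping track that the perturbed primary objects are now $\delta\boldsymbol{\varphi}$ and $\delta\bvec{R}$.

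Next I would compute the induced perturbations of the secondary variables. Representing $\delta\bvec{R}\in T_{\bvec{R}}SO(3)$ as in Remark~\ref{rem:matrix_so3}, put $\Sigma=\bvec{R}^{-1}\delta\bvec{R}\in so(3)$ and $\delta\boldsymbol{\theta}=j(\Sigma)$. Then $\delta\bvec{v}=\partial_t\delta\boldsymbol{\varphi}$ and $\delta\boldsymbol{\varepsilon}=\partial_S\delta\boldsymbol{\varphi}-\delta\boldsymbol{\theta}\wedge\bvec{d_3}$ as in \eqref{unpetitdebut}, while the cocycle Lemma \eqref{eq:cocycle}--\eqref{eq:cocycle_space} and its Corollary \eqref{eq:var_omega_kappa} give $\delta\boldsymbol{\omega}=\partial_t\delta\boldsymbol{\theta}+\delta\boldsymbol{\theta}\wedge\boldsymbol{\omega}$ and $\delta\boldsymbol{\kappa}=\partial_S\delta\boldsymbol{\theta}+\delta\boldsymbol{\theta}\wedge\boldsymbol{\kappa}$. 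Substituting these into the perturbation formula \eqref{PerturbHermitianProd} for each of the four quadratic forms, the corotational combinations collapse: $\delta\boldsymbol{\omega}-\delta\boldsymbol{\theta}\wedge\boldsymbol{\omega}=\partial_t\delta\boldsymbol{\theta}$, $\delta\boldsymbol{\kappa}-\delta\boldsymbol{\theta}\wedge\boldsymbol{\kappa}=\partial_S\delta\boldsymbol{\theta}$, $\delta\boldsymbol{\varepsilon}-\delta\boldsymbol{\theta}\wedge\boldsymbol{\varepsilon}=\partial_S\delta\boldsymbol{\varphi}-\delta\boldsymbol{\theta}\wedge\boldsymbol{\varphi}'$, and $\delta\bvec{v}-\delta\boldsymbol{\theta}\wedge\bvec{v}=\partial_t\delta\boldsymbol{\varphi}-\delta\boldsymbol{\theta}\wedge\dot{\boldsymbol{\varphi}}$, the last wedge term dropping because $\mathbb{A}=\rho A\,\mathbb{1}$ makes $\mathbb{A}\bvec{v}$ collinear with $\dot{\boldsymbol{\varphi}}$. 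This reproduces \eqref{eq:ham_princ_2} verbatim.

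I would then finish as in Section~\ref{sec:VariationalPrinciple1}: integrate by parts in $S$ and in $t$; the $t$-boundary terms vanish since $\delta\boldsymbol{\varphi}(\cdot,t_i)=\delta\boldsymbol{\theta}(\cdot,t_i)=0$, the $S$-boundary terms are removed under the boundary conditions \eqref{eq:BCCool}, and a cyclic permutation of the mixed product rewrites $(\delta\boldsymbol{\theta}\wedge\boldsymbol{\varphi}')\cdot(\mathbb{G}\boldsymbol{\varepsilon})$ as $\delta\boldsymbol{\theta}\cdot(\boldsymbol{\varphi}'\wedge\mathbb{G}\boldsymbol{\varepsilon})$. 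Because $\delta\boldsymbol{\varphi}$ is an arbitrary $\mathbb{R}^3$-valued test field and, by Proposition~\ref{prop:j} together with the invertibility of $\bvec{R}$, $\delta\boldsymbol{\theta}=j(\bvec{R}^{-1}\delta\bvec{R})$ also ranges over all $\mathbb{R}^3$-valued test fields, the fundamental lemma of the calculus of variations yields the two relations \eqref{eq:equillibrium}.

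The only delicate point --- and the main obstacle --- is the rotation sector: one must check that an honest tangent vector $\delta\bvec{R}\in T_{\bvec{R}}SO(3)$ produces, through $\Sigma=\bvec{R}^{-1}\delta\bvec{R}$, exactly the corotational variation $\delta\boldsymbol{\theta}$ of Section~\ref{sec:perturbation}, so that $\delta\boldsymbol{\omega}$ and $\delta\boldsymbol{\kappa}$ genuinely carry the cocycle corrections $\delta\boldsymbol{\theta}\wedge\boldsymbol{\omega}$ and $\delta\boldsymbol{\theta}\wedge\boldsymbol{\kappa}$ rather than being plain derivatives of $\delta\boldsymbol{\theta}$. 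This is precisely what the Lemma around \eqref{eq:cocycle}--\eqref{eq:cocycle_space} supplies; once that identification is in place, the remainder is the bookkeeping of Section~\ref{sec:VariationalPrinciple1}.
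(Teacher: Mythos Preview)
Your proposal is correct and follows essentially the same approach as the paper: the paper does not supply a separate proof of this theorem but simply declares it a reformulation of the variational computation of Section~\ref{sec:VariationalPrinciple1}, and your argument is precisely a careful rerun of that computation, making explicit the identification $\delta\boldsymbol{\theta}=j(\bvec{R}^{-1}\delta\bvec{R})$ between tangent vectors of $SO(3)$ and the corotational perturbation used there.
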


\begin{rem}[Boundary conditions]
    In order to achieve the minimization of the action $S,$ we used boundary conditions \eqref{eq:BCCool}.
\end{rem}

\subsection{Momenta through Legendre transform}

Now, we make use of the Legendre transform to define momenta variables $\bvec{p} $ and $\Sigma.$ They belong to the \emph{phase space} $T^*C = \{(\boldsymbol{\varphi}, \bvec{R}, \bvec{p} , \Sigma) \},$ also called the \emph{position-momenta space}.

\begin{definition}\label{def:metrics}
    On $C,$ we define a Riemannian metric $g$ allowing to identify $TC$ with $T^*C.$ For any $(\boldsymbol{\varphi}, \bvec{R}) \in C$ and any $\Bigl((\delta \boldsymbol{\varphi}, \delta \bvec{R}), (\widetilde{\delta \bvec{R}}, \widetilde{\delta \boldsymbol{\varphi}}) \Bigr) \in T_{(\boldsymbol{\varphi},\bvec{R})}C \times T_{(\boldsymbol{\varphi},\bvec{R})}C,$ the metric $g$ is set to be
    \begin{equation}
        g\Bigl((\delta \boldsymbol{\varphi}, \widetilde{\delta \boldsymbol{\varphi}}), (\delta \bvec{R}, \widetilde{\delta \bvec{R}}) \Bigr) = < \delta \boldsymbol{\varphi}, \widetilde{\delta \boldsymbol{\varphi}} >_{[0,L]} + \ll \delta \bvec{R} , \widetilde{\delta \bvec{R}}\gg_{[0,L]}
    \end{equation}
    with
    \begin{equation}
        < \delta \boldsymbol{\varphi}, \widetilde{\delta \boldsymbol{\varphi}} >_{[0,L]} = \int_0^{L} <\delta \boldsymbol{\varphi}(S), \widetilde{\delta \boldsymbol{\varphi}} (S)> \text{d}S,
    \end{equation}
    $<\cdot ,\cdot >$ denoting the usual scalar product on $\mathbb{R}^3,$ and 
    \begin{equation}
        \ll \delta \bvec{R}, \widetilde{\delta \bvec{R}} \gg_{[0,L]} = \int_0^{L} \ll \delta \bvec{R}(S), \widetilde{\delta \bvec{R}} (S) \gg \text{d}S,
    \end{equation}
    $\ll \bvec{A},\bvec{B} \gg = \frac{1}{2}\text{Tr}(\bvec{A} \cdot \bvec{B}^{T})$ denoting the Frobenius  scalar product of matrices.
\end{definition}

\begin{rem}\label{rem:riesz_lemma}
    The metric $g$ is a useful tool to handle geometric objects in order to achieve concrete computations. Using Riesz's Lemma, definition \ref{def:metrics} provides a representation for covectors in $T^*C.$ Let $\Sigma$ be any covector above $\bvec{R}.$ Thanks to the remark \ref{rem:matrix_so3}, $\Sigma$ can be uniquely represented as a matrix element --denoted again by $\Sigma$ with a slight abuse of notations-- of the form 
     \begin{equation}
       \Sigma(S)  = \frac{\partial \bvec{R}(S,\nu)}{\partial \nu}_{|\nu=0}
    \end{equation}
    where $(\bvec{R}(S,\nu))_{\nu \in \mathbb{R}}$ is any $\nu$-dependent $SO(3)$ matrix valuing $\bvec{R}(S,0) = \bvec{R}(S)$ at $\nu = 0.$ The covector acts on vectors $\delta\bvec{R} $ thanks to the metric:
    \begin{equation}
        \Sigma(\delta\bvec{R}) = \ll \Sigma,\delta\bvec{R}  \gg_{[0,L]}.
    \end{equation}
    A similar remark holds for position momenta $\bvec{p}$: thanks to the metric $g,$ any covector $\bvec{p} $ can be represented as a vector element, denoted $\bvec{p}$ as well. In the sequel, we will make a great use of matrix and vector representations of momenta.
\end{rem}

\begin{rem}\label{rem:p_translation}
    Left translations by elements of the Lie group $SO(3)$ have an interesting interpretation. In the present context, a covector $\Sigma  \in T^*_{\bvec{R}} SO(3)$ is a momentum in Cartesian coordinates, while its left translated element $\bvec{R}^{-1}\Sigma$ lies naturally on the Lie algebra $T^*_{Id} SO(3) \simeq so^* (3).$ $\bvec{R}^{-1}\Sigma$ is then the covector $\Sigma$ translated by $\bvec{R}^{-1}$ on the moving frame. This point is underlined by denoting $\Sigma$ with a normal font (and not bold) as it can be seen as a covector in the Cartesian frame. 
    As already pointed out in \cite{simo1992}, the metric $g$ is invariant with respect to left translations towards the moving frame:
    \begin{equation}
        \ll \delta \bvec{R}, \widetilde{\delta \bvec{R}} \gg_{[0,L]} = \ll \bvec{R}^{-1} \delta \bvec{R}, \bvec{R}^{-1}\widetilde{\delta \bvec{R}} \gg_{[0,L]}.
    \end{equation}
\end{rem}
\begin{rem}    
    Let us also recall here that the left-invariant Frobenius metric \eqref{eq:defmetric} yields to the identification $so(3)^* \simeq so(3)$ allowing to finally identify
\begin{equation}
    T^*_{Id} SO(3) \simeq so(3).
\end{equation}
\end{rem}

To obtain a Hamiltonian formulation, one must study the dynamics on the space of momenta. A fundamental tool to recover equations of motion in terms of the momenta $\bvec{p}  \text{   and   } \Sigma$ is the Legendre transform. 
\begin{prop}\label{LegendreTransform}
    The Legendre transform induced by $\mathcal{L}$ is 
    \begin{equation}
        (q,\delta q ) \in TC \mapsto (q,\pi) \in T^*C
    \end{equation}
    with $q = (\boldsymbol{\varphi}, \bvec{R}) \in C$ and
    \begin{equation}\label{eq:p}
        \pi = \begin{pmatrix}
            \bvec{p}\\ 
            \Sigma
        \end{pmatrix} = \begin{pmatrix}
            \mathbb{A} \delta\boldsymbol{\varphi}\\
            \bvec{R} j^{-1}\left(\mathbb{J} j(\bvec{R}^{-1} \delta\bvec{R})\right)
        \end{pmatrix}\in T_q^*C.
    \end{equation}
\end{prop}
\begin{proof}
The first formula is simply: 
$$
\bvec{p} =\frac{\partial \mathcal{L}}{\partial \dot{\boldsymbol{\varphi}}} = \mathbb{A} \delta\boldsymbol{\varphi}.
$$
The second one results from the isometry $j$:
\begin{align*}
\ll \Sigma,\delta \bvec{R} \gg=
\ll \frac{\partial \mathcal{L}}{\partial \dot{\bvec{R}}},\delta \bvec{R} \gg
    &= <j(\bvec{R}^{-1} \delta \bvec{R}),\mathbb{J} j(\bvec{R}^{-1} \dot{\bvec{R}})>\\
    &= \ll \bvec{R}^{-1} \delta \bvec{R},j^{-1}\left(\mathbb{J} j(\bvec{R}^{-1} \dot{\bvec{R}})\right)\gg \\
    &= \ll \bvec{R}j^{-1}(\mathbb{J} j(\bvec{R}^{-1} \dot{\bvec{R}})),\delta \bvec{R} \gg . 
\end{align*}
where $\dot{\bvec{R}}=\frac{\partial \bvec{R}}{\partial t}$  and $\delta \bvec{R}$ is any element of the tangent space $T_{\bvec{R}} SO(3)^{[0,L]}$ of the group $SO(3)^{[0,L]}$ of rotations along the fiber $\mathfrak{C}$ at $\bvec{R}.$ 
\end{proof}

\subsection{Hamiltonian function and Poisson brackets (second formulation)}

The next proposition follows from a simple change of variables.

\begin{prop}
The Legendre transform of Prop.\ref{LegendreTransform} induces the Hamiltonian $H \colon T^*C \to \mathbb{R}$ defined as
\begin{equation}\label{eq:ham_pq}
\begin{split}
H(q,\pi) &=\frac{1}{2} \int_{0}^{L}
<\bvec{p} ,\mathbb{A}^{-1}\bvec{p} >
+
<j(\bvec{R}^{-1} \Sigma ),\mathbb{J}^{-1}j(\bvec{R}^{-1} \Sigma )> \\
&\quad\quad + <\frac{\partial \boldsymbol{\varphi}}{\partial S}-\bvec{d_3},\mathbb{G}(\frac{\partial \boldsymbol{\varphi}}{\partial S}-\bvec{d_3})>
+ <j(\bvec{R}^{-1} \frac{\partial \bvec{R}}{\partial S} ), \mathbb{H} j(\bvec{R}^{-1} \frac{\partial \bvec{R}}{\partial S} )> \, \dd S.
\end{split}
\end{equation}    
\end{prop}

\begin{rem}
    Using the metric of definition \ref{def:metrics}, the following fundamental duality relation (see chapter II, theorem 8.6 of \cite{Marle1987}) holds:
    \begin{equation}
        g\Bigl((\bvec{p} , \Sigma), (\delta \boldsymbol{\varphi}, \delta \bvec{R})\Bigr) = \mathcal{L}(\boldsymbol{\varphi},\bvec{R}, \delta\boldsymbol{\varphi}, \delta \bvec{R} ) + H(\boldsymbol{\varphi},\bvec{R},\bvec{p} , \Sigma).
    \end{equation}
\end{rem}

\begin{definition}
  On $T^*C = \{ (\boldsymbol{\varphi} , \bvec{R}, \bvec{p}, \Sigma)\},$ we define the canonical Poisson bracket:
\begin{equation}\label{eq:can_bracket}
    \{f,g\} = \int_0^L < \frac{\partial f}{\partial \boldsymbol{\varphi}}, \frac{\partial g}{\partial \bvec{p} } > - < \frac{\partial g}{\partial \boldsymbol{\varphi}}, \frac{\partial f}{\partial \bvec{p} } > + \ll \frac{\partial f}{\partial \bvec{R}}, \frac{\partial g}{\partial \Sigma} \gg - \ll \frac{\partial g}{\partial \bvec{R}}, \frac{\partial f}{\partial \Sigma} \gg \, \dd S.
\end{equation}    
\end{definition}

At this stage, let us make an important remark. Material indifference is phrased in variables $(\boldsymbol{\varphi}, \bvec{R})$ as invariance with respect to semidirect product $SO(3) \ltimes \mathbb{R}^3$ transformations: for any $\bvec{Q} \in SO(3)$ and $\bvec{c} \in \mathbb{R}^3,$ 
\begin{equation}\label{eq:matereial_ind}
    \begin{pmatrix}
        \boldsymbol{\varphi}\\
        \bvec{R}
    \end{pmatrix} \mapsto 
    \begin{pmatrix}
        \bvec{Q} \boldsymbol{\varphi} + \bvec{c}\\
        \bvec{Q} \bvec{R}
    \end{pmatrix}.
\end{equation}
With respect to the variables $(p , \boldsymbol{\sigma}, \boldsymbol{\varepsilon}, \boldsymbol{\kappa}),$ such a quasi-static rigid transformation gets reformulated as
\begin{equation}\label{eq:casimir_cause}
\begin{array}{ccl}
     \boldsymbol{\sigma} =& 0  &\text{ (quasi-static transformation)}\\
     \bvec{p}=& 0 &\text{ (quasi-static transformation)}\\     \boldsymbol{\kappa} =& 0   &\text{ (rigid transformation)}\\
     \boldsymbol{\varepsilon}= & 0 &\text{ (rigid transformation)}.
\end{array}
\end{equation}
In turn, these relations, when plugged in the Poisson bracket of theorem \ref{thm:old_ham_form}, give rise to the existence of non-trivial Casimir functions. More precisely, at the point given by equation \eqref{eq:casimir_cause}, the Poisson bracket of equations \eqref{eqn:main} becomes
\begin{align}
    \{f,g\}(0,0,0,0) = \int_0^L &  < \frac{\partial f}{\partial p }, \frac{\partial }{\partial S}\left( \frac{\partial g}{\partial \varepsilon} \right)> - < \frac{\partial g}{\partial p }, \frac{\partial }{\partial S}\left( \frac{\partial f}{\partial \varepsilon} \right)> \\
    +& < \frac{\partial f}{\partial \sigma}, \frac{\partial }{\partial S}\left( \frac{\partial g}{\partial \kappa} \right)> - < \frac{\partial g}{\partial \sigma}, \frac{\partial }{\partial S}\left( \frac{\partial f}{\partial \kappa} \right)>\dd S.
\end{align}
Hence, this bracket admits obvious degeneracies: any function $g$ admitting space-independent derivative will provide a Casimir function at zero, meaning that for any function $f$ of the variables $(p, \sigma, \varepsilon, \kappa),$
\begin{equation}
    \{f,g\}(0,0,0,0) = 0.
\end{equation}
The geometric reason for this is reduction theory of Poisson structures. By this, we mean that the Poisson bracket of theorem \ref{thm:old_ham_form} comes from a reduction procedure by quotienting out the Poisson structure \eqref{eq:can_bracket} through transformations of the form \eqref{eq:matereial_ind}. An interested reader can consult \cite{Marle1987}, chapter IV, for more insights on the topic.
One of the main ideas of the present work is the following: using material coordinates to construct Hamiltonian structure shortcuts this quotienting procedure because it ensures material invariance automatically. As we will see later on (see theorem \ref{thm:ham_form}), using the mobile frame to construct a Hamiltonian formulation of the Timoshenko model does not give rise to any non-trivial Casimir function.

\begin{prop}
In the coordinates $\{\boldsymbol{\varphi}, \bvec{p} , \bvec{R}, \Sigma\},$ the equations of motion are
\begin{equation}\label{eq:motionSIGMA}
\begin{alignedat}{6}
\frac{\partial \,\boldsymbol{\varphi}}{\partial t}&= \mathbb{A}^{-1}\bvec{p} 
&&\hspace{0.5cm}&
\frac{\partial \,\bvec{p} }{\partial t}&= \frac{\partial \, \mathbb{G}\boldsymbol{\varepsilon}}{\partial S}\\
\frac{\partial \,\bvec{R}}{\partial t}&= \bvec{R} j^{-1}(\mathbb{J}^{-1} j(\bvec{R}^{-1}\Sigma))
&&\hspace{0.5cm}&
\frac{\partial \, \Sigma}{\partial t } &= \bvec{R}(\frac{\partial\, \mathbb{H}\mathbb{K}}{\partial S}
 +
j^{-1}(\frac{\partial \boldsymbol{\varphi}}{\partial S}\wedge(\mathbb{G}\boldsymbol{\varepsilon})) + j^{-1}(\mathbb{J}^{-1}j(\bvec{R}^{-1} \Sigma))\bvec{R}^{-1}\Sigma).
\end{alignedat}
\end{equation}
\end{prop}

\subsection{Hamiltonian function and Poisson brackets (third formulation)}

In our task of formulating a Hamiltonian approach of the Timoshenko model, let us look for a better set of variables. $\Sigma$ -- being in the cotangent fiber above $\bvec{R}$ -- shall be replaced by the variable $\boldsymbol{\sigma}$ being its shifted counterpart above $Id.$ Indeed, the Hamiltonian \eqref{eq:ham_pq} and the Poisson brackets \eqref{eq:can_bracket} provide unsatisfactory Hamilton equations, them not being formulated on the mobile frame. In order to deal with this issue, we use -- in addition to Frobenius metric -- left translations on the group $SO(3)$ to identify the cotangent space $T^* SO(3)$ and $SO(3) \times so(3):$
\begin{equation}
\begin{array}{ccc}
    T^*SO(3)& \to &SO(3) \times so(3)  \\
    \Sigma(S) \in T^*_{\bvec{R}(S)} SO(3) & \mapsto & (\bvec{R}(S), \bvec{R}^{-1}(S) \Sigma(S).
\end{array}
\end{equation}

The map $j$ allows to represent any anti-symmetric matrix as a vector of $\mathbb{R}^3.$ Therefore, we obtain an isomorphism
\begin{equation}
\begin{array}{ccccccc}
    T^* C & \to & T^*\left( (\mathbb{R}^3)^{[0,L]}\right) &\times& SO(3)^{[0,L]} &\times& (\mathbb{R}^3)^{[0,L]}\\
    (\boldsymbol{\varphi}, \bvec{R}, \bvec{p} , \Sigma) & \mapsto & \boldsymbol{\varphi}, \bvec{p} && \bvec{R} && \boldsymbol{\sigma}
\end{array}
\end{equation}
where
\begin{equation}
    \boldsymbol{\sigma} = j(\bvec{R}^{-1} \Sigma)
\end{equation}
is the momentum of the rotation written on the moving frame using a $3$-dimensional vector.

\begin{rem}\label{rk:sigma_SO3}
    We emphasize an interesting feature of $\boldsymbol{\sigma}.$ By construction, $\boldsymbol{\sigma}$ is invariant by the action of $SO(3)$ on $T^* SO(3)$: for any $\bvec{Q} \in SO(3),$ and denoting $p_{\bvec{Q}\bvec{R}}$ the rotation momentum associated to $\bvec{Q} \cdot \bvec{R}(t),$ 
\begin{equation}
    j\left((\bvec{Q}\bvec{R})^{-1} (p_{\bvec{Q}\bvec{R}}) \right) = j(\bvec{R}^{-1} \bvec{Q}^{-1} \bvec{Q} \Sigma) =  \boldsymbol{\sigma}.
\end{equation}
This property of $\boldsymbol{\sigma}$ will be important in the construction of Hamilton's equations.
\end{rem}
Using the natural set of variables $(\boldsymbol{\varphi}, \bvec{p} , \bvec{R}, \boldsymbol{\sigma})$, it  is elementary to deduce the Hamiltonian from equation \eqref{eq:ham_pq}:
\begin{equation}\label{eq:triv_Ham}
            \begin{split}
            H(\boldsymbol{\varphi}, \bvec{p} , \bvec{R}, \boldsymbol{\sigma}) &=\frac{1}{2} \int_{0}^{L}
            <\bvec{p} ,\mathbb{A}^{-1}\bvec{p} >
            +
            <\boldsymbol{\sigma},\mathbb{J}^{-1}\boldsymbol{\sigma}> \\
            &\quad\quad + <(\frac{\partial \boldsymbol{\varphi}}{\partial S}-\bvec{d_3}),\mathbb{G}(\frac{\partial \boldsymbol{\varphi}}{\partial S}-\bvec{d_3})>
            + <j(\bvec{R}^{-1} \frac{\partial \bvec{R}}{\partial S} ), \mathbb{H} j(\bvec{R}^{-1} \frac{\partial \bvec{R}}{\partial S})> \, \dd S. 
            \end{split}
\end{equation}
We are now able to provide an alternative Poisson bracket.
\begin{theorem}
    In the coordinates $\{\boldsymbol{\varphi}, \bvec{p} , \bvec{R}, \boldsymbol{\sigma}\},$ the Poisson bracket \eqref{eq:can_bracket} becomes
    \begin{equation}\label{eq:triv_bracket}
    \begin{split}
        \{\Bar{f},\Bar{g}\} =& \int_0^L  < \frac{\partial \Bar{f}}{\partial \boldsymbol{\varphi}}, \frac{\partial \Bar{g}}{\partial \bvec{p} } > - < \frac{\partial \Bar{g}}{\partial \boldsymbol{\varphi}}, \frac{\partial \Bar{f}}{\partial \bvec{p} } >\\
        & + \ll \frac{\partial \Bar{f}}{\partial \bvec{R}}, \bvec{R}j^{-1}(\frac{\partial \Bar{g}}{\partial \boldsymbol{\sigma}}) \gg
        - \ll \frac{\partial \Bar{g}}{\partial \bvec{R}}, \bvec{R}j^{-1}(\frac{\partial \Bar{f}}{\partial \boldsymbol{\sigma}}) \gg \text{d}S. 
\end{split}
\end{equation}
\end{theorem}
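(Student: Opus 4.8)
The plan is to read \eqref{eq:triv_bracket} as a change-of-variables computation inside the canonical bracket \eqref{eq:can_bracket}, the only ingredients being the three properties of $j$ gathered in Proposition~\ref{prop:j} and the left-invariance of the Frobenius metric \eqref{eq:defmetric}. First I would observe that the substitution $\boldsymbol{\sigma} = j(\bvec{R}^{-1}p_{\bvec{R}})$ affects only the fibre coordinate over $SO(3)$ and leaves $\boldsymbol{\varphi}$, $p_{\boldsymbol{\varphi}}$ and $\bvec{R}$ unchanged; hence, writing a function $\bar f$ of the new variables as a function $f$ of the old ones, $\partial_{\boldsymbol{\varphi}} f = \partial_{\boldsymbol{\varphi}} \bar f$ and $\partial_{p_{\boldsymbol{\varphi}}} f = \partial_{p_{\boldsymbol{\varphi}}} \bar f$, so the first line of \eqref{eq:triv_bracket} is verbatim the first two terms of \eqref{eq:can_bracket}. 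The whole problem is thus to rewrite the rotational block $\ll \partial_{\bvec{R}} f, \partial_{p_{\bvec{R}}} g \gg - \ll \partial_{\bvec{R}} g, \partial_{p_{\bvec{R}}} f \gg$.

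The second step handles the momentum derivative. A variation $\delta p_{\bvec{R}}$, with $\boldsymbol{\varphi}, p_{\boldsymbol{\varphi}}, \bvec{R}$ frozen, induces $\delta\boldsymbol{\sigma} = j(\bvec{R}^{-1}\delta p_{\bvec{R}})$, hence $\ll \partial_{p_{\bvec{R}}} f, \delta p_{\bvec{R}} \gg = < \partial_{\boldsymbol{\sigma}} \bar f, j(\bvec{R}^{-1}\delta p_{\bvec{R}}) >$. Turning the $\mathbb{R}^3$ scalar product into a Frobenius one through the isometry property of $j$, and then moving $\bvec{R}$ across by left-invariance of $\ll\cdot,\cdot\gg$, I obtain $\partial_{p_{\bvec{R}}} f = \bvec{R}\, j^{-1}(\partial_{\boldsymbol{\sigma}} \bar f)$, and likewise for $g$. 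This already produces the factors $\bvec{R}j^{-1}(\partial_{\boldsymbol{\sigma}}\bar g)$ and $\bvec{R}j^{-1}(\partial_{\boldsymbol{\sigma}}\bar f)$ that appear in \eqref{eq:triv_bracket}.

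The third step is the configuration derivative $\partial_{\bvec{R}} f$, and this is the step I expect to be the main obstacle: the chain rule here picks up an extra contribution because $\boldsymbol{\sigma}$ itself depends on $\bvec{R}$. Setting $\Sigma = \bvec{R}^{-1}\delta\bvec{R} \in so(3)$ and differentiating $\bvec{R}^{-1}p_{\bvec{R}}$ in the direction $\delta\bvec{R}$, the induced $\delta\boldsymbol{\sigma}$ is, after applying $j$ and using that $j$ is a Lie morphism, a cross product built from $\boldsymbol{\sigma}$ and $j(\Sigma)$. Inserting this together with the second step into the rotational block, the terms that pair $\partial_{\bvec{R}}\bar f$ against $\partial_{\boldsymbol{\sigma}}\bar g$ are exactly the second line of \eqref{eq:triv_bracket}; it then remains to check that the residual terms, which pair $\partial_{\boldsymbol{\sigma}}\bar f$ against $\partial_{\boldsymbol{\sigma}}\bar g$ through $\boldsymbol{\sigma}$, cancel after antisymmetrisation in $(f,g)$, using the Lie-morphism property once more together with the cyclic invariance of the scalar triple product. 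This bookkeeping is the delicate part --- it is exactly the place where, in the general theory of $T^*G$ in body coordinates, a Lie--Poisson term of the form $\pm < \boldsymbol{\sigma}, \partial_{\boldsymbol{\sigma}}\bar f \wedge \partial_{\boldsymbol{\sigma}}\bar g >$ tends to survive --- so one must keep careful track of the signs and of which representative of $p_{\bvec{R}}$ is held fixed when $\bvec{R}$ is varied. Once those residual terms are seen to vanish, assembling the three steps yields \eqref{eq:triv_bracket}.
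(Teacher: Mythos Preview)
Your first two steps match the paper exactly: the $(\boldsymbol{\varphi}, p_{\boldsymbol{\varphi}})$ block is untouched, and the identity $\partial_{p_{\bvec{R}}} f = \bvec{R}\, j^{-1}(\partial_{\boldsymbol{\sigma}} \bar f)$ is obtained just as you describe (isometry of $j$ plus left-invariance of $\ll\cdot,\cdot\gg$). The divergence is at your third step. The paper does \emph{not} compute the chain-rule contribution coming from the $\bvec{R}$-dependence of $\boldsymbol{\sigma}$ and then argue for a cancellation; instead it invokes Remark~\ref{rk:sigma_SO3}---the invariance of $\boldsymbol{\sigma}$ under the left $SO(3)$-action $(\bvec{R}, p_{\bvec{R}}) \mapsto (Q\bvec{R}, Qp_{\bvec{R}})$---to assert directly that $\ll \partial_{\bvec{R}} f, \delta \bvec{R} \gg = \ll \partial_{\bvec{R}} \bar f, \delta \bvec{R} \gg$ (equation~\eqref{eq:partialder_sigma}). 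Once that is granted, substitution into \eqref{eq:can_bracket} gives \eqref{eq:triv_bracket} with nothing left over.

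The gap in your outline is the proposed mechanism for disposing of the residual. Your instinct that a term of Lie--Poisson type $\pm < \boldsymbol{\sigma},\, \partial_{\boldsymbol{\sigma}}\bar f \wedge \partial_{\boldsymbol{\sigma}}\bar g>$ tends to appear in body coordinates is correct, but such a term is already antisymmetric in $(f,g)$ and therefore cannot disappear ``by antisymmetrisation in $(f,g)$''. So your third step, as written, would not close. The paper's route is different: it uses Remark~\ref{rk:sigma_SO3} to identify $\partial_{\bvec R} f$ with $\partial_{\bvec R}\bar f$ from the start, so the residual never arises. If instead you insist on the naive chain rule with $p_{\bvec R}$ held fixed as a matrix, you should expect the cross terms to assemble into exactly the Lie--Poisson piece rather than vanish; matching \eqref{eq:triv_bracket} then rests entirely on the convention implicit in \eqref{eq:can_bracket} for how a covector $p_{\bvec R}\in T_{\bvec R}^*SO(3)$ is transported when $\bvec R$ is varied, which is precisely what the paper's invariance remark is encoding.
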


\begin{proof}
Let $f \colon \{(\boldsymbol{\varphi},\bvec{p} , \bvec{R}, \Sigma) \} \to \mathbb{R}.$ In $(\boldsymbol{\varphi},\bvec{p} , \bvec{R}, \boldsymbol{\sigma})$ coordinates, 
\begin{equation}
    f(\boldsymbol{\varphi},\bvec{p} , \bvec{R}, \Sigma) = \Bar{f}(\boldsymbol{\varphi},\bvec{p} , \bvec{R}, j(\bvec{R}^{-1} \Sigma)) = \Bar{f}(\boldsymbol{\varphi},\bvec{p} , \bvec{R}, \boldsymbol{\sigma})
\end{equation} 
The equation:
\begin{equation}\label{eq:diff_pr}
    \frac{\partial f}{\partial  \Sigma} = \bvec{R} j^{-1} ( \frac{\partial \Bar{f}}{\partial \boldsymbol{\sigma}}).
\end{equation}
results from 

\begin{equation}
    \ll \frac{\partial f}{\partial \Sigma}, \delta \Sigma \gg = <\frac{\partial f}{\partial \boldsymbol{\sigma}}, j(\bvec{R}^{-1} \delta \Sigma) > = \ll \bvec{R} j^{-1} ( \frac{\partial \Bar{f}}{\partial \boldsymbol{\sigma}}), \delta \Sigma \gg.
\end{equation}
 
Since $\boldsymbol{\sigma}$ is $SO(3)$-invariant (see remark \ref{rk:sigma_SO3}):
\begin{equation}\label{eq:partialder_sigma}
    \ll \frac{\partial f}{\partial  \bvec{R}}, \delta \bvec{R} \gg = \ll \frac{\partial \Bar{f}}{\partial  \bvec{R}}, \delta \bvec{R} \gg.
\end{equation}
The Poisson bracket in the coordinates $\{\boldsymbol{\varphi}, p , \bvec{R}, \boldsymbol{\sigma}\}$ is therefore
\begin{align*}
    \{ \Bar{f}, \Bar{g} \} &= \int_0^L < \frac{\partial f}{\partial \boldsymbol{\varphi}}, \frac{\partial g}{\partial \bvec{p} } > - < \frac{\partial g}{\partial \boldsymbol{\varphi}}, \frac{\partial f}{\partial \bvec{p} } > + \ll \frac{\partial f}{\partial \bvec{R}}, \frac{\partial g}{\partial \Sigma} \gg - \ll \frac{\partial g}{\partial \bvec{R}}, \frac{\partial f}{\partial \Sigma} \gg \, \dd S\\
    &= \int_0^L < \frac{\partial \Bar{f}}{\partial \boldsymbol{\varphi}}, \frac{\partial \Bar{g}}{\partial \bvec{p} } > - < \frac{\partial \Bar{g}}{\partial \boldsymbol{\varphi}}, \frac{\partial \Bar{f}}{\partial \bvec{p} } >
    + \ll \frac{\partial \Bar{f}}{\partial  \bvec{R}}, \bvec{R} j^{-1} ( \frac{\partial \Bar{g}}{\partial \boldsymbol{\sigma}})  \gg - \ll \frac{\partial \Bar{g}}{\partial  \bvec{R}}, \bvec{R} j^{-1} ( \frac{\partial \Bar{f}}{\partial \boldsymbol{\sigma}})  \gg \, \dd S 
    \end{align*}
and the desired formula for the Poisson brackets is proved.
\end{proof}

\begin{rem}
Let us comment on the Poisson bracket described by equation \eqref{eq:triv_bracket}. The first line remains unchanged compared to the canonical bracket \eqref{eq:can_bracket}. The second line is similar, although subtler: the term $j^{-1}(\frac{\partial \Bar{g}}{\partial \boldsymbol{\sigma}})$ can be thought as being dual to an $S$-dependent rotation momentum, \textit{i.e.} an element of $T_{Id}\left(SO(3)^{[0,L]}\right) = so(3)^{[0,L]}.$ It is then left-translated by $\bvec{R}$ so that
\begin{equation}
    \bvec{R}j^{-1}(\frac{\partial \Bar{g}}{\partial \boldsymbol{\sigma}}) \in T_{\bvec{R}}\left(SO(3)^{[0,L]}\right).
\end{equation}
Since 
\begin{equation}
    \frac{\partial \Bar{f}}{\partial \bvec{R}} \in T^*_{\bvec{R}}\left(SO(3)^{[0,L]}\right),
\end{equation}
the term
\begin{equation}
    \ll \frac{\partial \Bar{f}}{\partial \bvec{R}}, \bvec{R}j^{-1}(\frac{\partial \Bar{g}}{\partial \boldsymbol{\sigma}}) \gg
\end{equation} is the natural pairing of a vector in $T\left(SO(3)^{[0,L]}\right)$ with a covector in $T^* \left(SO(3)^{[0,L]}\right).$
\end{rem}

Let us now write the dynamics in the coordinates we introduce. In addition to the definitions of momenta $\bvec{p}$ and $\boldsymbol{\sigma}$, equilibrium equations \eqref{eq:equillibrium} provide the following proposition.

\begin{prop}
In the coordinates $\{\boldsymbol{\varphi}, \bvec{p} , \bvec{R}, \boldsymbol{\sigma}\},$ the equations of motion are
\begin{equation}\label{eq:motion}
\begin{alignedat}{6}
\frac{\partial \,\boldsymbol{\varphi}}{\partial t}&= \mathbb{A}^{-1}\bvec{p} 
&&\quad\quad\quad\quad\hspace{1cm}&
\frac{\partial \,\bvec{p} }{\partial t}&= \frac{\partial \, \mathbb{G}\boldsymbol{\varepsilon}}{\partial S}\\
\frac{\partial \,\bvec{R}}{\partial t}&= \bvec{R} j^{-1}(\mathbb{J}^{-1} \boldsymbol{\sigma})
&&\quad\quad\quad\quad\hspace{1cm}&
\frac{\partial \, \boldsymbol{\sigma}}{\partial t } &= \frac{\partial\, \mathbb{H}\boldsymbol{\kappa}}{\partial S}
 +
\frac{\partial \boldsymbol{\varphi}}{\partial S}\wedge(\mathbb{G}\boldsymbol{\varepsilon}).
\end{alignedat}
\end{equation}
\end{prop}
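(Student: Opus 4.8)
\emph{Proof plan.} I would obtain \eqref{eq:motion} in two complementary ways: a short derivation reading the four lines directly off the Legendre transform \eqref{eq:p} and the equilibrium relations \eqref{eq:equillibrium}, and a verification that \eqref{eq:motion} is precisely Hamilton's system $\partial_t f=\{f,H\}$ for the Hamiltonian \eqref{eq:ham_pq} and the bracket \eqref{eq:triv_bracket}, which is the point the section is after.

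The first and third lines of \eqref{eq:motion} are the inverted Legendre map. From \eqref{eq:p}, $p_{\boldsymbol{\varphi}}=\mathbb{A}V_{\boldsymbol{\varphi}}=\mathbb{A}\,\partial_t\boldsymbol{\varphi}$, and $\mathbb{A}$ being constant and invertible gives $\partial_t\boldsymbol{\varphi}=\mathbb{A}^{-1}p_{\boldsymbol{\varphi}}$. For the rotational part I would unwind $\boldsymbol{\sigma}=j(\bvec{R}^{-1}p_{\bvec{R}})$ using \eqref{eq:p}: since $\bvec{R}^{-1}\bigl(\bvec{R}\,j^{-1}(\mathbb{J}\,j(\bvec{R}^{-1}V_{\bvec{R}}))\bigr)=j^{-1}(\mathbb{J}\,j(\bvec{R}^{-1}V_{\bvec{R}}))$, we get $\boldsymbol{\sigma}=\mathbb{J}\,j(\bvec{R}^{-1}V_{\bvec{R}})=\mathbb{J}\,j(\mathbb{W})=\mathbb{J}\boldsymbol{\omega}$, hence $j(\bvec{R}^{-1}\partial_t\bvec{R})=\mathbb{J}^{-1}\boldsymbol{\sigma}$, which is the third line. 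The second and fourth lines are then \eqref{eq:equillibrium} rewritten in the new variables: the identities just obtained say that, as $\mathbb{R}^3$-valued functions, $\mathbb{A}\bvec{v}=p_{\boldsymbol{\varphi}}$ and $\mathbb{J}\boldsymbol{\omega}=\boldsymbol{\sigma}$, and the time derivatives in \eqref{eq:equillibrium} are the componentwise $t$-derivatives of $\mathbb{A}\bvec{v}$ and $\mathbb{J}\boldsymbol{\omega}$ (they were produced by integration by parts in $t$, and for the first line one uses in addition that $\mathbb{A}=\rho A\,\mathbb{1}$ is a scalar tensor and $\boldsymbol{\varphi}\in\mathbb{R}^3$). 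Hence the first line of \eqref{eq:equillibrium} reads $\partial_t p_{\boldsymbol{\varphi}}=\partial_S(\mathbb{G}\boldsymbol{\varepsilon})$ and the second reads $\partial_t\boldsymbol{\sigma}=\partial_S(\mathbb{H}\boldsymbol{\kappa})+\partial_S\boldsymbol{\varphi}\wedge(\mathbb{G}\boldsymbol{\varepsilon})$.

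For the Hamiltonian verification I would evaluate $\{f,H\}$ from \eqref{eq:triv_bracket} for $f$ running over $\boldsymbol{\varphi},p_{\boldsymbol{\varphi}},\bvec{R},\boldsymbol{\sigma}$. In the chart $\{\boldsymbol{\varphi},p_{\boldsymbol{\varphi}},\bvec{R},\boldsymbol{\sigma}\}$ the kinetic part of $H$ equals $\frac{1}{2}<p_{\boldsymbol{\varphi}},\mathbb{A}^{-1}p_{\boldsymbol{\varphi}}>+\frac{1}{2}<\boldsymbol{\sigma},\mathbb{J}^{-1}\boldsymbol{\sigma}>$ and is $\bvec{R}$-independent, so $\partial H/\partial p_{\boldsymbol{\varphi}}=\mathbb{A}^{-1}p_{\boldsymbol{\varphi}}$ and $\partial H/\partial\boldsymbol{\sigma}=\mathbb{J}^{-1}\boldsymbol{\sigma}$ are immediate, and inserted in the $\{\boldsymbol{\varphi},\cdot\}$ and $\{\bvec{R},\cdot\}$ parts of \eqref{eq:triv_bracket} they give the first and third lines of \eqref{eq:motion}. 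One integration by parts in $S$, whose boundary term vanishes by \eqref{eq:BCCool}, yields $\partial H/\partial\boldsymbol{\varphi}=-\partial_S(\mathbb{G}\boldsymbol{\varepsilon})$, hence $\{p_{\boldsymbol{\varphi}},H\}=\partial_S(\mathbb{G}\boldsymbol{\varepsilon})$. The single non-routine step is $\partial H/\partial\bvec{R}$, where the computation re-uses the $\bvec{R}$-variation of the strain energy from Section~\ref{sec:VariationalPrinciple1}: perturbing $\boldsymbol{\kappa}=j(\bvec{R}^{-1}\bvec{R}')$ and $\bvec{d_3}=\bvec{R}\bvec{e_3}$ in $\bvec{R}$ via \eqref{unpetitdebut}, \eqref{eq:var_omega_kappa} and \eqref{PerturbHermitianProd}, the corotational combinations collapse, $\delta\boldsymbol{\kappa}-\delta\boldsymbol{\theta}\wedge\boldsymbol{\kappa}=\partial_S\delta\boldsymbol{\theta}$ and $\delta\boldsymbol{\varepsilon}-\delta\boldsymbol{\theta}\wedge\boldsymbol{\varepsilon}=-\delta\boldsymbol{\theta}\wedge\partial_S\boldsymbol{\varphi}$; one integration by parts in $S$ (boundary terms dropped via \eqref{eq:BCCool}) and cyclicity of the triple product give $j\bigl(\bvec{R}^{-1}\,\partial H/\partial\bvec{R}\bigr)=-\bigl(\partial_S(\mathbb{H}\boldsymbol{\kappa})+\partial_S\boldsymbol{\varphi}\wedge(\mathbb{G}\boldsymbol{\varepsilon})\bigr)$. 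Plugging this into $\{\boldsymbol{\sigma},H\}$, using the left-invariance identity $\ll M,\bvec{R}A\gg=\ll\bvec{R}^{-1}M,A\gg$ and the isometry of $j$, reproduces the fourth line. The main obstacle is precisely this $\bvec{R}$-derivative and checking that its boundary contributions vanish; everything else is direct substitution.
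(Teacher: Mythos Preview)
Your first approach---reading the four lines directly off the Legendre map \eqref{eq:p} and the equilibrium relations \eqref{eq:equillibrium}---is exactly how the paper justifies this proposition: it states the result immediately after the sentence ``In addition to the definitions of momenta $p_{\boldsymbol{\varphi}}$ and $\boldsymbol{\sigma}$, equilibrium equations \eqref{eq:equillibrium} provide the following proposition,'' with no further proof. Your second approach, the Hamiltonian verification $\partial_t f=\{f,H\}$ with the bracket \eqref{eq:triv_bracket}, is also correct and tracks the paper's computations closely, but that material is the proof of the \emph{subsequent} theorem (Hamiltonian formulation of Timoshenko model), not of this proposition; the paper deliberately separates the two steps.
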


\begin{rem}
\begin{equation}
    \boldsymbol{\sigma}=\mathbb{J}j(\bvec{R}^{-1}\frac{\partial \bvec{R}}{\partial t})    
\end{equation}
holds along a trajectory because in this equation, $\bvec{R}$ depends on $t.$ We recover therefore the definition
\begin{equation}
    \boldsymbol{\sigma}=\mathbb{J}\boldsymbol{\omega}
\end{equation}
of the kinetic momentum (also called \emph{moment of momentum}). In general, $\bvec{R}$ and $\boldsymbol{\sigma}$ are not related by any equation, because those are seen as two independent variables of the phase space. A similar remark holds for the displacement.
\end{rem}

We now combine the Hamiltonian \eqref{eq:triv_Ham} and the Poisson bracket \eqref{eq:triv_bracket} to obtain a new Hamiltonian formulation of Timoshenko model in Lagrangian coordinates.
%

\begin{theorem}[Hamiltonian formulation of Timoshenko model]\label{thm:ham_form}
Under suitable boundary conditions, the equations of motion \eqref{eq:motion} are Hamiltonian for the bracket \eqref{eq:triv_bracket} and the Hamiltonian \eqref{eq:triv_Ham}.
\end{theorem}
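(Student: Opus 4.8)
The plan is to verify Hamilton's equations $\partial_t q = \{q,H\}$ for each of the four coordinates $q \in \{\boldsymbol{\varphi}, p_{\boldsymbol{\varphi}}, \bvec{R}, \boldsymbol{\sigma}\}$, by first computing the four variational derivatives of the Hamiltonian \eqref{eq:triv_Ham} and then feeding them into the bracket \eqref{eq:triv_bracket}. Two of the derivatives are immediate from the form of \eqref{eq:triv_Ham}: since $p_{\boldsymbol{\varphi}}$ and $\boldsymbol{\sigma}$ enter only algebraically, $\partial H/\partial p_{\boldsymbol{\varphi}} = \mathbb{A}^{-1}p_{\boldsymbol{\varphi}}$ and $\partial H/\partial\boldsymbol{\sigma} = \mathbb{J}^{-1}\boldsymbol{\sigma}$. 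For $\partial H/\partial\boldsymbol{\varphi}$, the only dependence is through $\boldsymbol{\varphi}'=\partial_S\boldsymbol{\varphi}$ in the shear term; perturbing and integrating by parts in $S$ gives $\partial H/\partial\boldsymbol{\varphi} = -\partial_S(\mathbb{G}\boldsymbol{\varepsilon})$, the boundary contribution $[\,\delta\boldsymbol{\varphi}\cdot(\mathbb{G}\boldsymbol{\varepsilon})\,]_0^L$ being exactly the first half of \eqref{eq:BCCool} and hence dropped under the prescribed boundary conditions. Substituting into \eqref{eq:triv_bracket} then immediately yields $\{\boldsymbol{\varphi},H\}=\mathbb{A}^{-1}p_{\boldsymbol{\varphi}}$ and $\{p_{\boldsymbol{\varphi}},H\}=-\partial H/\partial\boldsymbol{\varphi}=\partial_S(\mathbb{G}\boldsymbol{\varepsilon})$ --the first two lines of \eqref{eq:motion}-- and, since the bracket pairs $\partial\bar f/\partial\bvec{R}$ against $\bvec{R}j^{-1}(\partial\bar g/\partial\boldsymbol{\sigma})$, it gives $\{\bvec{R},H\}=\bvec{R}j^{-1}(\partial H/\partial\boldsymbol{\sigma})=\bvec{R}j^{-1}(\mathbb{J}^{-1}\boldsymbol{\sigma})$, the third line, under the identification $so(3)\simeq\mathbb{R}^3$.

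The crux is the derivative $\partial H/\partial\bvec{R}$ and the corresponding equation for $\boldsymbol{\sigma}$. Here $\bvec{R}$ enters \eqref{eq:triv_Ham} twice: through $\bvec{d_3}=\bvec{R}\bvec{e_3}$ in the shear term and through $\boldsymbol{\kappa}=j(\bvec{R}^{-1}\bvec{R}')$ in the bending term. Writing $\delta\bvec{R}=\bvec{R}\,j^{-1}(\delta\boldsymbol{\theta})$ and using $\delta\bvec{d_3}=\delta\boldsymbol{\theta}\wedge\bvec{d_3}$ together with the corollary relation $\delta\boldsymbol{\kappa}=\partial_S\,\delta\boldsymbol{\theta}+\delta\boldsymbol{\theta}\wedge\boldsymbol{\kappa}$ of \eqref{eq:var_omega_kappa}, one perturbs both terms, shifts the $S$-derivative off $\delta\boldsymbol{\theta}$ by an integration by parts (whose boundary term is the second half of \eqref{eq:BCCool}, again dropped), and applies the triple-product identity $(a\wedge b)\cdot c = a\cdot(b\wedge c)$. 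This writes $\delta H$ as $\int_0^L \delta\boldsymbol{\theta}\cdot\bvec{m}\,\dd S$ with an explicit $\bvec{m}$; converting this $\mathbb{R}^3$ pairing into a matrix pairing via the isometry of $j$ (Proposition \ref{prop:j}(1)) and the left-invariance of the Frobenius metric \eqref{eq:defmetric} gives $\partial H/\partial\bvec{R}=\bvec{R}\,j^{-1}(\bvec{m})$. Plugging this into \eqref{eq:triv_bracket}, the same two properties collapse $\{\boldsymbol{\sigma},H\}$ to $-\bvec{m}$, and it remains to check that $-\bvec{m}$ equals $\partial_S(\mathbb{H}\boldsymbol{\kappa})+\partial_S\boldsymbol{\varphi}\wedge(\mathbb{G}\boldsymbol{\varepsilon})$, the fourth line of \eqref{eq:motion}.

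The main obstacle I expect is precisely this $\bvec{R}$-variation: $\bvec{R}$ appears nonlinearly through the group structure and in two separate terms, so the bookkeeping must be organised carefully around the left-trivialisation $\Sigma=\bvec{R}^{-1}\delta\bvec{R}$ and the chain of identifications $T^*_{Id}SO(3)\simeq so(3)\simeq\mathbb{R}^3$, or else the cross-product contributions and the placement of the $S$-derivative get misattributed. A secondary point needing care is the mobile-versus-algebraic distinction of Section \ref{sec:derivation}: the equations \eqref{eq:motion} are to be read intrinsically, and one should confirm that, written componentwise in the moving frame, they reproduce the system \eqref{SystemPropreM0}, the corotational terms $\boldsymbol{\omega}\wedge\mathbb{J}\boldsymbol{\omega}$, $\boldsymbol{\kappa}\wedge\mathbb{H}\boldsymbol{\kappa}$, etc.\ being supplied by \eqref{eq:time_der}. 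Finally, one must spell out the ``suitable boundary conditions'' of the statement: they are exactly those that annihilate the two $S$-boundary terms produced by the integrations by parts, namely the conditions \eqref{eq:BCCool} already invoked for the variational principle.
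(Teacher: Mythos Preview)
Your proposal is correct and follows essentially the same route as the paper's proof: compute the four variational derivatives of $H$, invoke the same perturbation identities \eqref{eq:var_omega_kappa} and \eqref{PerturbHermitianProd} for the $\bvec{R}$-dependence, integrate by parts under the boundary conditions \eqref{eq:BCCool}, and convert between $\mathbb{R}^3$ and $so(3)$ pairings via the isometry $j$ and left-invariance of the Frobenius metric. The paper organises the $\bvec{R}$-variation by treating the shear and bending contributions separately (their $K(\bvec{R})$ and $F(\bvec{R})$) rather than packaging them into a single $\bvec{m}$ as you do, but the computation is identical in substance; your extra remarks on recovering \eqref{SystemPropreM0} componentwise go beyond what the paper proves here.
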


\begin{proof}
We are left to prove the four equations:
\begin{align}
    \{ \boldsymbol{\varphi}, H \} &= \mathbb{A}^{-1}\bvec{p} \label{eq:ham_phi}\\
\{\bvec{p} , H \} &= \frac{\partial \, \mathbb{G}\boldsymbol{\varepsilon}}{\partial S}\label{eq:ham_p_phi}\\
\{ \bvec{R}, H \}&= \bvec{R} j^{-1}(\mathbb{J}^{-1} \boldsymbol{\sigma})\label{eq:ham_R}\\
\{\boldsymbol{\sigma}, H \} &= \frac{\partial\, \mathbb{H}\boldsymbol{\kappa}}{\partial S}
 +
\frac{\partial \boldsymbol{\varphi}}{\partial S}\wedge(\mathbb{G}\boldsymbol{\varepsilon})\label{eq:ham_sigma}.
\end{align}
Applying formula \eqref{eq:triv_bracket}, 
\begin{align}
\{\boldsymbol{\varphi}, H \} &= \frac{\partial H}{\partial \bvec{p} } \\
\{\bvec{p} , H \} &= -\frac{\partial H}{\partial \boldsymbol{\varphi}}\\
\{\bvec{R}, H \}&= \bvec{R} j^{-1}(\frac{\partial H}{\partial \boldsymbol{\sigma}})\\
\{\boldsymbol{\sigma}, H \} &= - j(\bvec{R}^{-1}\frac{\partial H}{\partial \bvec{R}}).\label{la90}
\end{align}

From here, equations \eqref{eq:ham_phi} and \eqref{eq:ham_R} are straightforward. Equation \eqref{eq:ham_p_phi} is derived by integration by parts: under the boundary assumption 

\begin{equation}\label{eq:boundary1}
    <\delta \boldsymbol{\varphi}, \mathbb{G}\boldsymbol{\varepsilon} > = 0
\end{equation}
at $S = 0$ and $S=L$,
\begin{equation}
   \int_{0}^{L} <\frac{\partial H}{\partial \boldsymbol{\varphi}}, \delta \boldsymbol{\varphi}> \dd S = \int_{0}^{L} <\mathbb{G} \boldsymbol{\varepsilon}, \frac{\partial \delta\boldsymbol{\varphi}}{\partial S} > \, \dd S   = -\int_{0}^{L} <\frac{\partial \mathbb{G} \boldsymbol{\varepsilon} }{\partial S}   , \delta\boldsymbol{\varphi} > \, \dd S.
\end{equation}
In order to prove equation \eqref{eq:ham_sigma}, we compute $\frac{\partial H}{\partial \bvec{R}}.$ Let us do that in two steps. We first compute the derivative with respect to $\bvec{R}$ of 
\begin{equation}
    M(\bvec{R}) :=  \frac{1}{2} <\frac{\partial\boldsymbol{\varphi}}{\partial S} -\bvec{d_3},\mathbb{G}(\frac{\partial\boldsymbol{\varphi}}{\partial S}-\bvec{d_3})>.
\end{equation}
By noticing that (see also \eqref{eq:pertubation_eps}):
\begin{equation}
    \delta \boldsymbol{\varepsilon} - \delta \boldsymbol{\theta} \wedge \boldsymbol{\varepsilon} = \frac{\partial \, \delta \boldsymbol{\varphi} }{\partial S} - \delta \boldsymbol{\theta} \wedge \frac{\partial \boldsymbol{\varphi}}{\partial S},
\end{equation}
applying formula \eqref{PerturbHermitianProd} and the fact that $\frac{\partial \, \delta \boldsymbol{\varphi} }{\partial S}$ does not depend on $\bvec{R}$, we obtain

\begin{align}
    \ll \frac{\partial M}{\partial \bvec{R}}, \delta \bvec{R} \gg &= < - \delta \boldsymbol{\theta} \wedge \frac{\partial \boldsymbol{\varphi} }{\partial S}, \mathbb{G} \boldsymbol{\varepsilon} >\\
    &= - < \delta \boldsymbol{\theta}, \frac{\partial \boldsymbol{\varphi} }{\partial S} \wedge \mathbb{G} \boldsymbol{\varepsilon} >\\
    &= - \ll \bvec{R} j^{-1} (\frac{\partial \boldsymbol{\varphi} }{\partial S} \wedge \mathbb{G} \boldsymbol{\varepsilon}) , \delta \bvec{R} \gg
\end{align}\
and consequently
\begin{equation}
    \frac{\partial M}{\partial \bvec{R}} = - \bvec{R} j^{-1}(\frac{\partial \boldsymbol{\varphi} }{\partial S} \wedge \mathbb{G} \boldsymbol{\varepsilon}).
\end{equation}
We then compute the derivative with respect to $\bvec{R}$ of 
\begin{equation}
    F(\bvec{R}) := \frac{1}{2} \int_{0}^{L} <j(\bvec{R}^{-1} \frac{\partial\bvec{R}}{\partial S} ), \mathbb{H} j(\bvec{R}^{-1} \frac{\partial\bvec{R}}{\partial S} )> \, \dd S.
\end{equation}
We combine the perturbation of $\boldsymbol{\kappa}$ stated in equation \eqref{eq:var_omega_kappa} and the formula \eqref{PerturbHermitianProd}. Using the boundary conditions 
\begin{equation}\label{eq:boundary2}
    <\delta \boldsymbol{\theta}, \mathbb{H} \boldsymbol{\kappa}> = 0
\end{equation} at $S = 0,$ $S=L$ and any time,
\begin{align}\begin{split}
        \ll \frac{\partial F}{\partial \bvec{R}}, \delta \bvec{R} \gg &= \frac{1}{2}\int_{0}^{L} \delta < \boldsymbol{\kappa}, \mathbb{H} \boldsymbol{\kappa}> \dd S\\
    &= \int_{0}^{L} <\delta \boldsymbol{\kappa} - \delta \boldsymbol{\theta} \wedge \boldsymbol{\kappa}, \mathbb{H} \boldsymbol{\kappa}> \dd S\\
    &= \int_{0}^{L} <(\delta \boldsymbol{\theta})^{'}, \mathbb{H} \boldsymbol{\kappa}> \dd S\\
    &= - \int_{0}^{L} <\delta \boldsymbol{\theta}, \frac{\partial \mathbb{H} \boldsymbol{\kappa}}{\partial S} > \dd S\\
    &=- \int_{0}^{L} \ll \delta \bvec{R}, \bvec{R} j^{-1}(\frac{\partial \mathbb{H} \boldsymbol{\kappa}}{\partial S}) \gg \dd S.
\end{split}
\end{align}
As a consequence,
\begin{equation}
    \frac{\partial H}{\partial \bvec{R}} =     \frac{\partial M}{\partial \bvec{R}} +     \frac{\partial F}{\partial \bvec{R}} =  - \bvec{R} j^{-1}( \frac{\partial \boldsymbol{\varphi} }{\partial S} \wedge \mathbb{G} \boldsymbol{\varepsilon} ) - \bvec{R} j^{-1}(\frac{\partial \mathbb{H} \boldsymbol{\kappa}}{\partial S})
\end{equation}
and this last computation, with the help of \eqref{la90}, proves equation \eqref{eq:ham_sigma}.
\end{proof}

\begin{rem}[Relation in between boundary conditions and conservation of energy]
    In the proof, we used the boundary conditions \eqref{eq:boundary1} and \eqref{eq:boundary2}. They are similar to
    \begin{enumerate}
        \item the ones used in the variational principle of section \ref{sec:VariationalPrinciple1} to obtain the equations of equilibrium,
        \item the ones used to prove the conservation of energy in section \ref{sec:conservation}.
    \end{enumerate}
    Let us comment the second item. It is no surprise to observe a relation in between the conservative property of a mechanical system and the fact that it admits a Hamiltonian formulation. Nevertheless, there is a slight difference in between the two equations \eqref{eq:boundary1} and \eqref{eq:boundary2} and 
    \begin{equation}
    \Big[\bvec{v}\mathbb{G}\boldsymbol{\varepsilon}+\boldsymbol{\omega}\mathbb{H}\boldsymbol{\kappa}\Big]_{0}^{L} = 0.
    \end{equation}
    In both cases, a sufficient condition for these boundary terms to vanish is that $\boldsymbol{\varphi}$ and $\bvec{R}$ are constant with respect to time at extreme points of the fiber $\mathfrak{C}.$
\end{rem}

\subsection{Closure relations through the last Hamiltonian formulation}

Closure relations \eqref{crossderivation2} and \eqref{closurerelations2} were part of the set of equations \eqref{eq:comme_mar}. For the sake of completeness, we use the Hamiltonian formulation of theorem \ref{thm:ham_form} to recover the closure relations \eqref{crossderivation2} and \eqref{closurerelations2}. In that context, the right tool to handle Poisson brackets is test functions. Therefore, both closure relations are reformulated dually in terms of test functions in the sequel.

\subsubsection{Time-derivative of strain-vector}

\begin{prop}
    For any test function $f \colon (\mathbb{R}^3)^{[0,L]} \to \mathbb{R},$
    \begin{equation}
        \frac{\partial f}{\partial t}(\boldsymbol{\varepsilon}(t)) = \int_0^L < \frac{\partial f}{\partial \boldsymbol{\varepsilon}}, \frac{\partial \bvec{v}}{\partial S} - \boldsymbol{\omega}\wedge \bvec{d_3} > \text{d}S.
    \end{equation}
where $\bvec{v} = \mathbb{A}^{-1} \bvec{p}$ and $\boldsymbol{\omega} = \mathbb{J}^{-1}\boldsymbol{\sigma}$.
\end{prop}

\begin{proof}
Thanks to the theorem \ref{thm:ham_form}, it is enough to show:
\begin{equation}
    \{ f(\boldsymbol{\varepsilon}), H\} = \int_0^L < \frac{\partial f}{\partial \boldsymbol{\varepsilon}}, \frac{\partial \bvec{v}}{\partial S} - \boldsymbol{\omega}\wedge \bvec{d_3} > \text{d}S.
\end{equation}
Using the Poisson bracket \eqref{eq:triv_bracket},
\begin{equation}
    \{ f(\boldsymbol{\varepsilon}), H\} = \int_0^L < \frac{\partial f}{\partial \boldsymbol{\varphi}}, \mathbb{A}^{-1} \bvec{p}> + \ll \frac{\partial f}{\partial \bvec{R}}, \bvec{R} j^{-1}(\boldsymbol{\omega}) \gg \text{d}S.
\end{equation}
In the previous sum, the first term is computed by the chain rule:
\begin{align}\label{eq:eps_first_term}
     < \frac{\partial f}{\partial \boldsymbol{\varphi}}, \mathbb{A}^{-1} \bvec{p}> &= < \frac{\partial f}{\partial \boldsymbol{\varepsilon}}, \frac{\partial \boldsymbol{\varepsilon}}{\partial \boldsymbol{\varphi}} \mathbb{A}^{-1} \bvec{p}>\\
     &= < \frac{\partial f}{\partial \boldsymbol{\varepsilon}}, \frac{\partial  \mathbb{A}^{-1} p}{\partial S}>.
\end{align}
The second term
\begin{equation}\label{eq:eps_second_term}
    \ll \frac{\partial f}{\partial \bvec{R}}, \bvec{R} j^{-1}(\boldsymbol{\omega}) \gg = <\frac{\partial f}{\partial \boldsymbol{\varepsilon}}, \frac{\partial \boldsymbol{\varepsilon}}{\partial \bvec{R}} \bvec{R} j^{-1}(\boldsymbol{\omega})>
\end{equation}
requires to compute $\frac{\partial \boldsymbol{\varepsilon}}{\partial \bvec{R}} \colon T_{\bvec{R}}\left( SO(3) \right)^{[0,L]} \to \left( \mathbb{R}^3 \right)^{[0,L]}.$ Any element of $T_{\bvec{R}}\left( SO(3) \right)^{[0,L]}$ can be written as $\bvec{R Q},$ with $\bvec{Q} \in \left( so(3) \right)^{[0,L]}$. So, we use equation \eqref{eq:pertubation_eps} to compute $\frac{\partial \boldsymbol{\varepsilon}}{\partial \bvec{R}}\cdot \bvec{RQ}$ for any $\bvec{Q} \in \left( so(3) \right)^{[0,L]}$:
\begin{equation}
    \frac{\partial \boldsymbol{\varepsilon}}{\partial \bvec{R}}\cdot \bvec{R} \bvec{Q} = j(\bvec{Q}) \wedge \bvec{d_3}.
\end{equation}
To conclude, we set $\bvec{Q} = j^{-1}(\boldsymbol{\omega})$ in the previous equation and gather \eqref{eq:eps_first_term} and \eqref{eq:eps_second_term} to provide
\begin{equation}
    \frac{\partial f(\boldsymbol{\varepsilon})}{\partial t} = \int_0^L < \frac{\partial f}{\partial \boldsymbol{\varepsilon}}, \frac{\partial \mathbb{A}^{-1} \bvec{p}}{\partial S} - \boldsymbol{\omega}\wedge \bvec{d_3} > \text{d}S.
\end{equation}
\end{proof}
\subsubsection{Time-derivative of spatial curvature}
\begin{prop}
    For any test function $f \colon (\mathbb{R}^3)^{[0,L]} \to \mathbb{R},$
    \begin{equation}
        \frac{\partial f}{\partial t}(\boldsymbol{\kappa}(t)) = \int_0^L < \frac{\partial f}{\partial \boldsymbol{\kappa} }, \frac{\partial \boldsymbol{\omega}}{\partial S} + \boldsymbol{\omega}\wedge \boldsymbol{\kappa} > \text{d}S.
    \end{equation}
\end{prop}
\begin{proof}
Thanks to the theorem \ref{thm:ham_form}, it is enough to show:
\begin{equation}
    \{ f(\boldsymbol{\kappa}), H\} = \int_0^L < \frac{\partial f}{\partial \boldsymbol{\kappa}}, \frac{\partial \boldsymbol{\omega}}{\partial S} + \boldsymbol{\omega} \wedge \boldsymbol{\kappa} > \text{d}S.
\end{equation}
Using the Poisson bracket \eqref{eq:triv_bracket},
\begin{align}
    \{ f(\boldsymbol{\kappa}), H\} &= \int_0^L \ll \frac{\partial f}{\partial \bvec{R}} , \bvec{R} j^{-1}(\boldsymbol{\omega}) \ll \text{d}S\\
    &= \int_0^L < \frac{\partial f}{\partial \boldsymbol{\kappa}}, \frac{\partial \boldsymbol{\kappa}}{\partial \bvec{R}} \bvec{R} j^{-1}(\boldsymbol{\omega})> \text{d}S.
\end{align}
We are left to show: 
\begin{equation}\label{eq:derivative_kappa}
    \frac{\partial \boldsymbol{\kappa}}{\partial \bvec{R}} \cdot \left( \bvec{R} j^{-1}(\boldsymbol{\omega})\right) = \frac{\partial \boldsymbol{\omega}}{\partial S} + \boldsymbol{\omega} \wedge \boldsymbol{\kappa}.
\end{equation}
Let us recall
$    \mathbb{K} = \bvec{R}^{-1} \frac{\partial \bvec{R}}{\partial S}.
$ From \cite{marsden2013introduction}, section 13.5, for any $\bvec{Q} \in so(3)^{[0,L]}$:
\begin{equation}
    \frac{\partial \mathbb{K}}{\partial \bvec{R}} \cdot (\bvec{R} \bvec{Q}) = \bvec{R}^{-1} \frac{\partial \bvec{R} \bvec{Q}}{\partial S} - \bvec{R}^{-1} \frac{\partial \bvec{R} }{\partial S}\bvec{Q}
\end{equation}
and
\begin{equation}
    \frac{\partial \bvec{Q}}{\partial S}
     = \bvec{R}^{-1} \frac{\partial \bvec{R} \bvec{Q}}{\partial S}- \bvec{Q} \bvec{R}^{-1} \frac{\partial \bvec{R}}{\partial S}.
\end{equation}
Combining the two last equations (see also equation \eqref{eq:cocycle_space}):
\begin{equation}
    \frac{\partial \mathbb{K}}{\partial \bvec{R}} \cdot \left( \bvec{R} \bvec{Q} \right) = \frac{\partial \bvec{Q}}{\partial S}
     + \left[\bvec{Q}, \bvec{R}^{-1} \frac{\partial \bvec{R}}{\partial S}\right].
\end{equation}
Since $\boldsymbol{\kappa} = j(\mathbb{K}),$\begin{equation}\label{eq:derivative_kappa2}
\frac{\partial \boldsymbol{\kappa}}{\partial \bvec{R}} \cdot \left(\bvec{R} \bvec{Q} \right)= j(\frac{\partial  \bvec{Q}}{\partial S}) + j(\bvec{Q}) \wedge \boldsymbol{\kappa}.
\end{equation}
We conclude at equation \eqref{eq:derivative_kappa} by setting $\bvec{Q} = j^{-1}(\boldsymbol{\omega})$ in equation \eqref{eq:derivative_kappa2}.
\end{proof}
\section{Conclusion and perspectives}

This paper has highlighted certain features that are intrinsic to the modeling of beams. First, such one-dimensional body, lying in a three-dimensional ambient space, could be modeled in various way: privileging the ambient space or the material space. This is particularly true for large transformations where Lagrangian and Eulerian formulation stand out. Second, these Cosserat structures have (at each point) degrees of freedom of different nature: translation and rotation. Accordingly, it is not surprising to observe that the Hamiltonian structure of such material body could be presented rigorously in various manner, all preserving the geometrical and material background, but focusing on various variables. \\
Three Hamiltonian structures and associated Poisson brackets have been presented (respectively \eqref{eq:Hamilton1} and \eqref{eqn:main}, \eqref{eq:ham_pq} and \eqref{eq:can_bracket}, \eqref{eq:triv_Ham} and \eqref{eq:triv_bracket}). The first one focuses on variables associated to the tangent space (secondary variable) and lead to dynamical equations written as two sets of first-order differential equations. These secondary variables have first-order shape as they are associated to vectors, co-vectors, or elements of $so(3)$. The two others are explicitly written on the mobile frame.  The second one is the canonical Hamiltonian structure of classical mechanics. The last one is formulated thanks to first-order degrees of freedom, \textit{i.e.} translations and rotations of the beam section.\\
This paper focuses on these Hamiltonian structures for large dynamical transformations in three-dimensional ambient space. As a consequence, less attention has been paid to external loading and non-standard boundary conditions (\textit{e.g.} Robin-type or non-holonomic contact). This technical aspect is postponed to future work dealing with more practical applications. Another extension of this work could be to extend derivatives and perturbations of quadratic forms in order to deal with a non-quadratic strain energy, typically a non-linear stress-strain relationship. However, the problem presented in this paper is sufficiently general to model dynamics of rigid bodies, linear vibrations \cite{reddy1999dynamic}, large transformations of string, buckling of beam \cite{spagnuolo2019targeted}, or three-dimensional dynamics. \\
Beyond theoretical interest, it seems to us that the exhibit of Hamiltonian structures may be of interest from a numerical point of view. In fact, each formulation can be used to construct a numerical integration scheme that preserves the Hamiltonian structure of the problem. Moreover, having three distinct formulations can then be useful so that the integration of each variable is subject to this preservation. \\
Another application of these Hamiltonian formulations concerns the loss of symmetry during a bifurcation. Indeed, it could be interesting to observe how each formulation is sensitive to such instabilities and how it expresses these symmetries and their loss. As an example, one could show that the plane problem in the $(\bvec{e_1},\bvec{e_3})$-plane is a sub-Hamiltonian system of the Hamiltonian system \eqref{eq:motion} in the following sense: first, the phase space of the plane problem is a Poisson sub-manifold of the space $\{(\boldsymbol{\varphi}, \bvec{p}, \bvec{R}, \boldsymbol{\sigma})\},$ meaning that the restriction of the variables $(\boldsymbol{\varphi}, \bvec{p} , \bvec{R}, \boldsymbol{\sigma})$ to the plane problem still provides a Poisson bracket, say $\{.,.\}_{\text{sub}},$ on this smaller space. Second, the plane problem in the $(\bvec{e_1},\bvec{e_3})$-plane is Hamiltonian for the restriction of the Hamiltonian \eqref{eq:triv_Ham} and the Poisson bracket $\{.,.\}_{\text{sub}}.$ Hamiltonian formulations of Timoshenko model could be used to study its instabilities by reformulating them geometrically. We are then left with the study of a Hamiltonian system in a neighborhood of a Poisson submanifold. This geometric approach to beam instabilities could be enriched by the numerous Hamiltonian formulations listed in the paper. 
This investigation is beyond the scope of this paper and postponed to future work.

\newpage

\bibliography{biblio}

\end{document}